\newcommandx{\add}[2][1=]{\todo[linecolor=red,backgroundcolor=red!25,bordercolor=red,#1]{#2}}
\newcommandx{\change}[2][1=]{\todo[linecolor=blue,backgroundcolor=blue!25,bordercolor=blue,#1]{#2}}
\newcommandx{\info}[2][1=]{\todo[linecolor=OliveGreen,backgroundcolor=OliveGreen!25,bordercolor=OliveGreen,#1]{#2}}
\theoremstyle{plain}
  \newtheorem{thm}{Theorem}
  \newtheorem{lm}{Lemma}
  \newtheorem{fct}{Fact}
\theoremstyle{definition}
  \newtheorem{df}{Definition}
  \newtheorem{rmk}{Remark}
\newcommand{\psv}{\mathcal{P}^\dagger(\mt{Val})}
\newcommand{\mt}[1]{\mathtt{#1}}
\newcommand{\mb}[1]{\mathtt{#1}}
\newcommand{\mc}[1]{\mathcal{#1}}
\newcommand{\jt}{\rightarrow}
\newcommand{\n}{\neg}
\newcommand{\et}{\wedge}
\newcommand{\B}{\mbox{$\Box$}}
\newcommand{\vp}{\mbox{$\varphi$}}
\newcommand{\gn}[1]{\mbox{$\ulcorner #1 \urcorner$}}
\newcommand{\ov}[1]{\overline{#1}}
 \newcommand{\ds}{\mathtt{D}}
 \newcommand{\nds}{\overline{\mathtt{D}}}
\newcommand{\publishednotice}{%
  \begin{center}
    \fbox{%
      \parbox{0.92\textwidth}{\small
        \textbf{Notice.} This is a draft version of a paper that has been published. For citation and reference, please consult the published version:

        P.\ Pawlowski, ``T\,-BAT Semantics and Its Logics,'' \textit{Logique et Analyse} \textbf{264} (2023), 335--356.
        DOI: \href{https://doi.org/10.2143/LEA.264.0.3294241}{10.2143/LEA.264.0.3294241}.
      }%
    }
  \end{center}
  \vspace{1em}
}
\date{}
\title{T\,-BAT semantics and its logics}
\begin{document}
\maketitle

\publishednotice

\begin{abstract}
\textbf{T-BAT} logic is a formal system designed to express the notion of informal provability. This type of provability is closely related to mathematical practice and is quite often contrasted with formal provability, understood as a formal derivation in an appropriate formal system. \textbf{T-BAT} is a non-deterministic four-valued logic. The logical values in \textbf{T-BAT} semantics convey not only the information whether a given formula is true but also about its provability status.

The primary aim of our paper is to study the proposed four-valued non-deterministic semantics. We look into the intricacies of the interactions between various weakenings and strengthenings of the semantics with axioms that they induce. We prove the completeness of all the logics that are definable in this semantics by transforming truth values into specific expressions formulated within the object language of the semantics.  Additionally, we utilize Kripke semantics to examine these axioms from a modal perspective by providing a frame condition that they induce. The secondary aim of this paper is to provide an intuitive axiomatization of \textbf{T-BAT} logic.
\end{abstract}
\section{Motivations}

The topic of provability has captivated philosophers for a substantial period of time. A contributing factor to this interest is the existence of at least two distinct notions of provability: formal and informal.\footnote{For a fantastic primer on this debate, have a look at: \cite{AntonuttiMarfori2010,AntonuttiMarfori2018,Leitgeb2009,Shapiro1985}.} The essence of formal provability is intrinsically syntactical, tethered to a bespoke formal language and axiomatic framework. According to this formulation, a mathematical statement is considered provable within a specific formal system iff it can be formally derived within that system i.e. if there is a finite sequence of formulas of the formal language, where each element of this sequence is either an axiom of the formal system, or is obtained by an application of one of the rules of the formal system, to the previous elements, and where the last element in this sequence is the formula in question. The inferential behavior of this provability is captured by a modal logic \textbf{GL}. \textbf{GL} is a propositional modal logic where the intended interpretation of $\B\vp$ is `'formula $\vp$ is formally provable''. Since formal provability is a precise mathematical notion, \textbf{GL} not only captures the formal provability but also this fact has been proven (see \cite{Solovay1976}). To state exactly the relation between $\Box$ and an arithmetical representation of the derivability relation $\vdash$  we start with a sufficiently strong arithmetical theory \textbf{T}. By sufficiently strong, we mean a theory that is capable of expressing its own syntax\footnote{Think about it as any extension of the Robinson's arithmetic \textbf{Q}.} This theory can also represent its own provability predicate $\mathbf{P}_{\mathbf{T}}$. Next key element is a function $\mathbf{r}: \mathtt{Var}_{\mathcal{L}_{\mathbf{GL}}} \mapsto \mathcal{L}_{\mathbf{T}}$ called arithmetical realization. This function tells us how propositional variables are going to  mapped on arithmetical sentences. Relative to this function we define the \textbf{T}-interpretation of a model formula by:

\begin{align}
\mathbf{T}_{\mathbf{r}}(p) = & \mathbf{r}(p) \\
\mathbf{T}_{\mathbf{r}}(\varphi \rightarrow \psi) = &\mathbf{T}_{\mathbf{r}}(\varphi) \rightarrow \mathbf{T}_{\mathbf{T}}(\psi)  \\
\mathbf{T}_{\mathbf{r}}(\varphi \land \psi) = &\mathbf{T}_{\mathbf{r}}(\varphi) \land \mathbf{T}_{\mathbf{T}}(\psi)  \\
\mathbf{T}_{\mathbf{r}}(\varphi \vee \psi) = &\mathbf{T}_{\mathbf{r}}(\varphi) \vee \mathbf{T}_{\mathbf{T}}(\psi)  \\
\mathbf{T}_{\mathbf{r}}(\neg\varphi ) = &\neg\mathbf{T}_{\mathbf{r}}(\varphi)   \\
\mathbf{T}_{\mathbf{r}}(\Box\varphi) = &\mathbf{T}_{\mathbf{r}}(\mathbf{P}_{\mathbf{T}}\ulcorner(\mathbf{T}_{\mathbf{r}}(\varphi))\urcorner) \\
\end{align}
where $\ulcorner \varphi \urcorner$ is the G\"odel number of the formula $\varphi$. By $\varphi_{\mathbf{T}_{\mathbf{r}}}$ we denote the set of all possible \textbf{T}-realizations of the formula $\varphi$. What Solovay shown is the following:

\[\vdash_{\mathbf{GL}} \varphi \text{ iff } \vdash_{\mathbf{T}} \varphi_{\mathbf{T}_{\mathbf{r}}}\]

where $\vdash_{\mathbf{GL}}$ is given by the following set of axioms and rules:

\begin{multicols}{2}
\begin{enumerate}
\item Axioms of propositional logic.
\item \textsf{K}: $\B(\vp \jt \psi) \jt (\B\vp \jt \B\psi)$.
\item \textsf{4}: $\B\vp \jt \B\B\vp$.
\item L\"ob's axiom: $\B(\B\vp \jt \vp) \jt \B\vp$.
\item Rule of \emph{modus ponens:} from $\vp\jt \psi,\vp$ infer $\psi$.
\item Rule of necessitation: if $\vp$ is a theorem so is $\B\vp$.
\end{enumerate}
\end{multicols}

In the other corner of the ring there is the notion of informal provability. This notion is closely related to mathematical practice. Ideally, it supposed to characterize and explain the ways ordinary mathematicians prove theorems since they do not conduct proofs in a fully formal framework. Mathematicians justify their theorems by using informal proofs--- logically connected sequences of sentences spelled out in a natural language expanded with appropriate mathematical notation and vocabulary. The connection between steps in such a proof is not purely syntactical. It seems to be established by a combination of semantical and pragmatical components and it often relies on the mathematical competence and intuition of the reader to be able to fill in the gaps in the proofs. Quite often the rules of inference in these proofs go beyond what is expressible by means of logic for instance by incorporating drawing, pictures, or diagrams.\footnote{For various approaches to informal provability check: \cite{Myhill1960,Shapiro1985,Horsten1998,Priest2006,Leitgeb2009,Pawlowski2023}.}

The relation between these two types of provability is especially tricky. Roughly, there are two camps to be associated with. The proponents of the first camp claim that each informal proof if actually an incomplete sketch of a formal derivation. So, in principle informal proofs are reducible to formal derivations.\footnote{Check \cite{AntonuttiMarfori2010,Rav1999,Azzouni2004} for various stances on this debate.} The proponents of the other camp claim that there are substantial differences between informal and formal proofs. One of the differences that is key for this paper is the difference in the inferential behavior of both types of provability. Proponents of the distinction between these types of proofs claim that the reflection schema:
\[\B\vp \jt \vp\]
is a valid schemata for informal provability and it is not valid in \textbf{GL}. Moreover, it is impossible to add all instances of this schemata together with other intuitive principles for provability, since the resulting first-order arithmetical theory will be inconsistent.

One approach aiming at circumventing this problem by using non-classical logic has been developed in \cite{Pawlowski2022a,Pawlowski2018a,Pawlowski2018}. Their idea was to express informal provability within propositional modal logic. They developed three different logics for this task: \textbf{BAT,CABAT} and \textbf{T-BAT}. The first two are based on this idea that one can divide the set of all mathematical claims into three disjoint subsets. Those that are provable, those that are refutable, and those which are neither provable nor refutable. Each of this statutes correspond to a logical value.

The idea behind \textbf{T-BAT} is more fine grained. The crucial observation is to separate truth-values with the provability status of a given mathematical claim. This results in splitting the set of the claims that  are neither provable, nor refutable into these that are true and these, that are false. The authors of \cite{Pawlowski2022a} proposed a non-deterministic semantics for this logic. They also have stated that this logic is the same as $\mathbf{S4}^-$ from \cite{Omori2016} and based on this observation they provided an axiomatization of \textbf{T-BAT}. Unfortunately, this was a bit too hasty and the proposed axiomatization is not strongly sound and complete with respect to the semantics. In the next section we are going to present in details the non-deterministic semantics of \textbf{T-BAT}. We will explore all the definable logics within this context.

\section{Non-deterministic semantics and \textbf{T-BAT}}

\subsection{History}
In the 1930s, two pioneers, Zich \cite{Zich1938} and Zawirski \cite{Zawirski1936}, independently developed non-deterministic semantics. Zich's contributions concentrated on the philosophical implications of complex-valued propositional calculus in the setting of natural language analysis. Meanwhile, Zawirski's insights were more oriented towards the logical aspects of non-deterministic semantics, as he explored the connections between many-valued logic, probability theory, and the philosophy of quantum mechanics, drawing inspiration from the work of Reichenbach \cite{Reichenbach1935, Reichenbach1936}.

Despite the novelty of their work, their contributions of Zich and Zawirski initially received limited recognition. It was not until the 1960s that their ideas get some academic traction, largely due to the efforts of Rescher. In his seminal work \cite{Rescher1962}, Rescher employed non-deterministic truth-functions to investigate implication in natural language, effectively building upon the theoretical foundations established by Zich and Zawirski. Rescher's research played a crucial role in bringing non-deterministic semantics back into the spotlight and solidifying its position as a significant area of inquiry within the field of logic and language.

However, the impact of Rescher's work was not immediately felt, and non-deterministic semantics remained a dead subfield for several decades. It wasn't until the modern development of these semantics in the framework of modal logic that the true potential of nondeterministic approaches was realized. Kearns \cite{Kearns1981} and Ivlev \cite{Ivlev1988} applied non-deterministic semantics to characterize modal logics. Kearns' motivation was primarily philosophical. He aimed to construct a semantics for normal modal logics that did not rely on the possible-worlds framework. Ivlev's goals, however, were slightly different. He was interested in non-normal modal logics, i.e., logics in which the rule of necessitation (if $\vp$ is a theorem of the system, then so is $\B\vp$) is invalid. It is well known that such systems do not support possible-worlds semantics. Therefore, Ivlev employed non-deterministic semantics to characterize some of these systems.

The modern development of non-deterministic semantics gained significant momentum with Avron's introduction of the non-deterministic matrix (Nmatrix) concept \cite{Avron2011,Avron2005a, Avron2005b, Avron2005c,Avron2005e}.\footnote{Independently, a notion of non-deterministic semantics has been used in the context of tractable inferences \cite{Crawford1998}. It appears that the authors coined the term "non-deterministic semantics," as in previous work this semantics was known as "quasi-truth functional." However, the authors did not introduce the notion of an Nmatrix.} Avron formally introduced the term "Nmatrix" and developed it into a rigorous mathematical concept \cite{Avron2005e}, establishing non-deterministic semantics as a powerful tool for studying proof theory and analyzing paraconsistent logics. Avron and his collaborators extensively developed the theoretical framework through a series of influential papers \cite{Avron2007, Avron2008a, Avron2009a}, particularly focusing on canonical systems, sequent calculi, and cut elimination.

A significant breakthrough in the application of non-deterministic semantics came in the field of modal logic. The groundwork was laid by Kearns \cite{Kearns1981} and Ivlev \cite{Ivlev1988}, who pioneered the use of truth values to convey information about a proposition's truth, necessity, and possibility. This approach was substantially advanced by contemporary logicians \cite{Coniglio2015,Coniglio2016, Omori2016,Pawlowski2021a,Pawlowski2023a,Omori2024,Omori2021,Omori2020,Pawlowski2022a,Graetz2022,Lahav2022,Ferguson2022,Coniglio2019b,Coniglio2021,Coniglio2024,Pawlowski2024,Pawlowski2024g}.

\subsection{Non-deterministic semantics}

Let's begin with the technical foundation. We will work with the language $\mathcal{L}= \{\mathtt{Var}, \n,\jt, \B\}$, where $\mathtt{Var}$ is an countable infinite set of variables and $\n,\jt,\B$ are the connectives. We assume that other connectives are defined as per standard conventions. By $\mc{FR}_{\mathcal{L}}$ we represent the set of all formulas of $\mathcal{L}$. Greek lower case letters such as $\varphi,\psi, \dots$ are used as meta-variables for formulas, whereas Greek capital letters like $\Gamma,\Sigma$ are used for sets of formulas. The key notion of non-deterministic semantics is a notion of a nmatrix.

\begin{df}[Nmatrix]
An \emph{Nmatrix}  $\mathtt{M}= (\mathtt{Val},\mathtt{D},\mathtt{O})$ for the language  $\mathcal{L}$ is at triple:

  \begin{itemize}
  \item $\mathtt{Val}\neq\emptyset$ is a set of truth values.
  \item $\emptyset \neq \mathtt{D}\subseteq \mathtt{Val}$ is a set of designated values. Per convention, we use $\nds$  as  $\nds = \{x \mid x\in \mathtt{Val} \et x \notin \ds \}$ 
  \item   $\mathtt{O}$ is a set  of functions, such that:
 \begin{enumerate}
\item $\ov{\circ}:\mt{Val} \mapsto \psv $, where $\psv$ is the power set of $\mt{Val}$ without the empty set and $\circ \in \{\n,\B \}$.
\item $\ov{\jt}:\mt{Val}^2 \mapsto \psv$. 

\end{enumerate}
 
  \end{itemize}
\end{df}

\begin{df}[Valuation]
Let $\mathtt{M}$ be a nmatrix. A valuation $v$ in an nmatrix $\mathtt{M}$ is a function $v: \mc{FR}_{\mathcal{L}} \mapsto \mathtt{Val}$ such that:
\begin{enumerate}
\item $\forall_{p \in \mathtt{Var}} v(p)\in \mathtt{Val}$. 
\item $\forall_{\vp \in \mc{FR}_{\mathcal{L}}}\, v(\circ(\vp_1))\in \ov{\circ}(\vp_1)$, $\circ \in \{\n,\B\}$.
\item $\forall_{\vp_1,\vp_2 \in \mc{FR}_{\mathcal{L}}} \,  v(\jt(\vp_1,\vp_2))\in \ov{\jt}(\vp_1,\vp_2)$.

\end{enumerate}

\end{df}
We abstain from using prefix notation and instead of $\jt(\vp_1,\vp_2)$ we will write $\vp_1 \jt \vp_2$.

\begin{df}[Tautology]
Let $\mt{M}=\langle \mt{Val},\mt{D},\mt{O}  \rangle$  be  an nmatrix. We say that $\vp$ is a tautology in $\mt{M}$ iff for any $\mt{M}$ valuation $v$, we have $v(\vp)\in \mt{D}$. We use $\vDash_{\mt{M}} \vp$ to denote that fact. 
\end{df}

\begin{df}[Consequence relation]
Let $\mt{M}=\langle \mt{Val},\mt{D},\mt{O}  \rangle$ be an nmatrix. We say that $\varphi$ \emph{follows from} $\Gamma$ or that the inference from $\Gamma$ to $\varphi$ is \emph{valid} (notation $\Gamma \vDash_{ \mt{M}} \varphi$) iff for all valuation $v$, if $v(\psi) \in \mt{D}$ for all $\psi \in \Gamma$, then $v(\vp) \in \mt{D}$.
\end{df}

\section{T-BAT: A Logic for Informal Provability}

\textbf{T-BAT} is a four-valued non-deterministic logic designed to capture the behavior of informal mathematical provability and it was introduced in \cite{Pawlowski2022} as an update of the \textbf{CABAT} system \cite{Pawlowski2018,Pawlowski2021}.

The ultimate aim behind the creation of \textbf{T-BAT} logic is to develop a theory of informal provability that mirrors the methodological approach used in Kripke's theory of truth. The challenges associated with informal provability closely resemble those encountered with truth predicates. In truth-theoretic frameworks, for instance, inferential principles like Tarski’s biconditionals cannot be consistently added to a sufficiently strong first-order arithmetical theory. Kripke’s solution was to use a well-motivated non-classical logic in the background to address this inconsistency. Analogously, here, while we aim to establish the principles of modal logic \textbf{S4} to capture informal provability, directly adding them would lead to inconsistency. Thus, a well-founded non-classical logic is required as the background for a first-order arithmetical theory of informal provability. According to \cite{Pawlowski2022}, this background logic is \textbf{T-BAT}. This paper focuses on its axiomatization.

Having said that, we are going to flesh-out the philosophical idea behind the \textbf{T-BAT}. The semantics of \textbf{T-BAT} are intended to model the perspective of a practicing mathematician. From this standpoint, mathematical claims can be provable, refutable, or neither provable nor refutable. However, claims that fall into this third category—those that are neither provable nor refutable—can still be either true or false. Naturally, if a sentence is provable, it must be true; if a sentence is refutable, it must be false. This intuition yields four logical possibilities:

\begin{itemize}
\item $\mb{P}$  stands for a claim being true and provable.
\item $\mb{t}$  stands for a claim being true and independent (independent = not provable, not refutable).
\item $\mb{f}$  stands for a claim being false and independent.
\item $\mb{R}$ stands for a claim being false and refutable. 
\end{itemize}

The designated values are $\mb{P}$ and $\mb{t}$, capturing the intuition that being a tautology of the system is about truth-preservation. More formally, this intuition is pinned-down by the following notion of a \textbf{T-BAT} Nmatrix. \footnote{Our truth-tables for \textbf{T-BAT} logic are slightly different than the original ones. The differences is in the case for implication for $\mb{f,R}$. In the original formulation of \textbf{T-BAT} the output is $\{\mb{P,t}\}$ whereas in our case it is $\mb{t}$. Given the motivations provided in \cite{Pawlowski2022} we think it is simply a typo. The updated \textbf{T-BAT} semantics is actually the same as of $\mathbf{S4}^-$ from \cite{Omori2016}.}

\begin{df}[\textbf{T-BAT} Nmatrix] 
The \textbf{T-BAT} Nmatrix $\mathtt{M}_{\mb{T}} = (\mathtt{Val}_\mb{T},\mathtt{D}_\mb{T},\mathtt{O}_\mb{T})$ is defined by the following stipulations:

\begin{enumerate}
\item  $\mathtt{Val}_\mb{T}=\{\mb{P,t,f,R}\}$.
\item  $\mathtt{D}_\mb{T}=\{\mb{P,t}\}$.
\item   $\mathtt{O}_\mb{T}$ is given by the following truth-tables:

\begin{table}[h]
\centering
\begin{tabular}{c|c}
$\ov{\neg} (\varphi)$ & $\varphi$ \\
\hline
$\mb{P}$ & $\mb{R}$ \\
$\mb{t}$ & $\mb{f}$ \\
$\mb{f}$ & $\mb{t}$ \\
$\mb{R}$ & $\mb{P}$ \\
\end{tabular}
\quad
\begin{tabular}{c|c}
$\ov{\B}(\varphi)$ & $\varphi$ \\
\hline
$\mb{P}$ & $\mb{P}$ \\
$\mb{\{f,R\}}$ & $\mb{t}$ \\
$\mb{\{f,R\}}$ & $\mb{f}$ \\
$\mb{R}$ & $\mb{R}$ \\
\end{tabular}
\quad
\begin{tabular}{c|cccc}
$\jt$ & $\mb{P}$ & $\mb{t}$ & $\mb{f}$ & $\mb{R}$ \\
\hline
$\mb{P}$  & $\mb{P}$ & $\mb{t}$     & $\mb{f}$    & $\mb{R}$ \\
$\mb{t}$  & $\mb{P}$ & $\mb{\{P,t\}}$ & $\mb{f}$& $\mb{f}$ \\
$\mb{f}$  & $\mb{P}$ & $\mb{\{P,t\}}$ & $\mb{\{P,t\}}$  & $\mb{t}$ \\
$\mb{R}$ & $\mb{P}$ &$\mb{P}$      & $\mb{P}$   & $\mb{P}$ \\
\end{tabular}

\caption{\textbf{T-BAT} truth-tables}
\end{table}

\end{enumerate}
\end{df}

Let's go into more detailed regarding the truth-tables. The negation truth-table maps each value to its natural opposite. When we negate a sentence that is provable $\mb{P}$, we get a refutable falsehood $\mb{R}$. This stems from how the negation of a proven statement is refuted by that very proof. Similarly, negating a refutable falsehood $\mb{R}$ gives us a provable truth $\mb{P}$. For independent statements, negating a true but unprovable statement $\mb{t}$ yields a false but unrefutable one $\mb{f}$ and vice versa. This preserves both the truth-functional behavior of negation and its interaction with provability status.

For implication, let's assume for the sake of the argument that Peano arithmetic, particularly the standard model, is a good approximation of whether something is true or not. Clearly, any implication whose consequent is provable is going to be provable --- by the power of classical logic. Similarly, by the power of classical logic, if the antecedent is provable, the first row of the truth-table of implication is very straightforward. Analogously for the last row. The remaining cases need some examples:
\begin{enumerate}
\item $\mb{t}\jt \mb{t}=\mb{P}$. Simply take twice the sentences $\mathtt{Con}_{\mathbf{PA}}$ asserting the consistency of the Peano arithmetic. It is true and the implication $\mathtt{Con}_{\mathbf{PA}} \jt \mathtt{Con}_{\mathbf{PA}}$ is provable. It is also possible to find less trivial examples.

\item $\mb{t}\jt \mb{t}=\mb{t}$. Take $\mathtt{Bew_{\mathbf{PA}}}(\gn{\mathtt{Con_{\mathbf{PA}}}}) \jt \mathtt{Con}_\mathbf{PA}$.

\item $\mb{t}\jt \mb{f}=\mb{f}$. The implication of any such sentences is false in the standard model, thus it cannot be proven and it cannot be refused, otherwise we would be able to prove unprovable sentences.

\item $\mb{t}\jt \mb{R}=\mb{f}$. Same justification as in the previous example.

\item $\mb{f}\jt \mb{t}=\mb{P}$. The example here is quite convoluted. We construct two sentences, $\varphi$ and $\psi$, such that:
\begin{itemize}
    \item $\varphi$ asserts that for every proof of $\varphi$, there exists a smaller proof of $\psi$,
    \item $\psi$ asserts that for every proof of $\psi$, there exists a smaller proof of $\varphi$.
\end{itemize}
First, both $\varphi$ and $\psi$ are true in the standard model of $\mb{PA}$. \footnote{If $\varphi$ were false, there would be a standard proof of $\varphi$ without a smaller proof of $\psi$, making $\varphi$ a provable but false statement, contradicting the fact that the standard model is the model of \textbf{PA}. A similar argument applies to $\psi$.}.

Second, neither $\varphi$ nor $\psi$ is provable in $\mathbf{PA}$ but $\n\varphi \jt \psi$ is provable. \footnote{For the details of the proofs check \url{https://math.stackexchange.com/questions/peano-arithmetic/63259\#63259}.}

\item $\mb{f}\jt \mb{t}=\mb{t}$.  Consider $\n \mathtt{Con}_{\mathbf{PA}} \jt \mathtt{Con}_{\mathbf{PA}}$.

\item $\mb{f}\jt \mb{f}=\mb{P}$. Consider $\n \mathtt{Con}_{\mathbf{PA}} \jt \n \mathtt{Con}_{\mathbf{PA}}$.
\item $\mb{f}\jt \mb{f}=\mb{t}$. Take any $\vp$ that is false and independent. Then the implication $\mathtt{Bew_{\mathbf{PA}}}(\gn{\vp}) \jt\vp$ is true but not provable.

\item $\mb{f}\jt \mb{R}=\mb{t}$. Stems directly from the power of classical logic.

\end{enumerate}

\subsection{Logics in the vicinity of \textbf{T-BAT}}

The purpose of this section is to explore the range of logics that the suggested semantics can define. Our focus will not be on \textbf{T-BAT} logic. Instead, our analysis will revolve around the simplest logic definable in this framework. This logic stands as the minimal logic among the ones that we will consider in this paper. Additionally, one can see this logic as the minimal modal logic validating axiom \textsf{T}. The logic is induced by the following nmatrix:

\begin{df}[logic \textbf{W}]

$\mathbf{W}= (\mathtt{Val}=\{\mb{P,t,f,R}\},\mathtt{D}=\{\mb{P,t}\},\mathtt{O})$, where $\mathtt{O}$ is given by:

\begin{table}[h!]
\begin{multicols}{3}

  \begin{center}
  \begin{tabular}{l|l}

    $\n$   & $\varphi$  \\ \midrule
    $\ov{\mb{D}}$ & $\mb{P}$  \\ \hline
    $\ov{\mb{D}}$ & $\mb{t}$  \\ \hline
    $\mb{D}$ & $\mb{f}$  \\ \hline
    $\mb{D}$ & $\mb{R}$  \\

  \end{tabular}
\end{center}

\columnbreak

  \begin{tabular}{l|l}
  
    $\B\varphi$   & $\varphi$  \\ \midrule
    $\mb{D}$      & $\mb{P}$  \\ \hline
    $\ov{\mb{D}}$ & $\mb{t}$  \\ \hline
    $\ov{\mb{D}}$ & $\mb{f}$  \\ \hline
    $\ov{\mb{D}}$ & $\mb{R}$  \\

  \end{tabular}

\columnbreak

  \begin{tabular}{l|llll}
   
    $\jt$    & $\mb{P}$ & $\mb{t}$ & $\mb{f}$ & $\mb{R}$ \\ \midrule
    $\mb{P}$ & $\mb{D}$ & $\mb{D}$ & $\ov{\mb{D}}$ & $\ov{\mb{D}}$  \\ \hline
    $\mb{t}$ & $\mb{D}$ & $\mb{D}$ & $\ov{\mb{D}}$ & $\ov{\mb{D}}$   \\ \hline
    $\mb{f}$ & $\mb{D}$ & $\mb{D}$ & $\mb{D}$ & $\mb{D}$   \\ \hline
    $\mb{R}$ & $\mb{D}$ & $\mb{D}$ & $\mb{D}$ & $\mb{D}$  \\
    
  \end{tabular}
\end{multicols}
\caption{Modal logic \textbf{W}}
\end{table}
\end{df}

From the point of proof theory, it is characterized as the minimal extension of the axiomatization of propositional logic plus axiom \textsf{T}:

\begin{description}[itemsep=0pt, leftmargin=2em, style=sameline, font={\mdseries\rmfamily}]
      \item[\textsf{p1}] $\vp \jt (\psi \jt \vp)$.
      \item[\textsf{p2}] $(\vp \jt (\psi \jt \chi))  \jt  (   (\vp \jt \psi)    \jt (\vp \jt \chi)  )$.
      \item[\textsf{p3}] $(\n \psi \jt \n \vp) \jt (\vp \jt \psi)$.
      \item[\textsf{T}] $\B\vp \jt \vp$.
      \item[\textsf{MP}] Rule of \emph{modus ponens}.
      \end{description}

We will use $\Gamma\vdash_{\mathbf{W}} \vp$ to denote its syntactic consequence relation and $\Gamma\vDash_{\mathbf{W}} \vp$ for its semantic consequence relation. Note that axioms \textsf{p1-p3} together with \emph{modus ponens} form an axiomatization of the classical propositional logic. We proceed with the proof of soundness.
\begin{thm}
For any $\Gamma,\vp$ if $\Gamma\vdash_{\mathbf{W}} \vp$, then $\Gamma\vDash_{\mathbf{W}} \vp$.
\end{thm}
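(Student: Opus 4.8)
The plan is to prove soundness by a standard induction on the length of a derivation in $\mathbf{W}$. The goal is to show that for every valuation $v$ in the nmatrix of $\mathbf{W}$, if $v(\psi)\in\mt{D}$ for all $\psi\in\Gamma$, then $v(\vp)\in\mt{D}$ whenever $\Gamma\vdash_{\mathbf{W}}\vp$. The base cases split into three kinds: $\vp$ is a member of $\Gamma$ (immediate from the hypothesis on $v$), $\vp$ is an instance of one of the axiom schemata \textsf{p1--p3} or \textsf{T}, and the one inductive step is the rule \textsf{MP}.

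First I would dispose of the rule \textsf{MP}: if both $\vp\jt\psi$ and $\vp$ receive designated values under $v$, I must check that $\psi$ is forced to be designated too. Reading off the implication table of $\mathbf{W}$, the only rows producing a non-designated output $\ov{\mb{D}}$ are those with antecedent in $\{\mb{P,t}\}$ (i.e.\ designated) and consequent in $\{\mb{f,R}\}$ (i.e.\ non-designated). Hence whenever the antecedent is designated and the whole implication is designated, the consequent cannot be non-designated, so $v(\psi)\in\mt{D}$, and \textsf{MP} preserves designation.

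The bulk of the work is the axiom cases, which I would handle by verifying that each schema is a \emph{tautology} in the sense of the earlier definition, i.e.\ receives a designated value under every valuation. For \textsf{T}, $\B\vp\jt\vp$, I inspect the $\B$-table: $\B\vp$ is designated only when $v(\vp)=\mb{P}$, and in that case $v(\vp)$ is itself designated, so the implication has a designated antecedent and a designated consequent, landing in $\mt{D}$; when $\B\vp$ is non-designated, the implication is automatically designated. The propositional axioms \textsf{p1--p3} are the main obstacle, not because any single case is deep but because the nondeterminism of $\ov{\jt}$ and $\ov{\n}$ means a naive case analysis over $\mt{Val}$ is not by itself conclusive: one must check every admissible \emph{output} of each non-deterministic entry, not merely one representative value. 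The clean way around this is to observe that in $\mathbf{W}$ the coarse designated/non-designated behaviour of the connectives is fully deterministic --- the tables of $\mathbf{W}$ are written directly in terms of $\mb{D}$ and $\ov{\mb{D}}$ --- so the quotient of $\mathtt{Val}_\mathbf{T}$ down to the two classes $\{\mb{D},\ov{\mb{D}}\}$ carries a genuine two-valued (classical) matrix in which $\n$ is classical negation, $\jt$ is classical implication, and $\B\vp$ collapses to a value $\le$ that of $\vp$.

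Consequently I would finish by noting that \textsf{p1--p3} together with \textsf{MP} axiomatize classical propositional logic, and under the $\mb{D}/\ov{\mb{D}}$ collapse every valuation of $\mathbf{W}$ induces a classical two-valued assignment on which \textsf{p1--p3} are ordinary classical tautologies; thus each instance takes a designated value under every $v$. Combining the axiom cases, the rule case, and the hypothesis case completes the induction and yields $\Gamma\vDash_{\mathbf{W}}\vp$. The only point demanding care is the legitimacy of this collapse: one must confirm that the images of $\ov{\jt},\ov{\n},\ov{\B}$ never straddle the $\mb{D}/\ov{\mb{D}}$ boundary, which is exactly what the $\mathbf{W}$ tables guarantee by being stated in those classes.
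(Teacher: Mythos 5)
Your proposal is correct and is essentially the paper's proof: the paper disposes of this theorem with the one-line remark that it is a straightforward induction, and what you have written is exactly the standard argument that remark gestures at --- induction over derivations, \textsf{MP} preserving designation, axiom \textsf{T} read off the $\B$-table, and the axioms \textsf{p1--p3} discharged by noting that the $\mathbf{W}$ tables for $\n$ and $\jt$ are deterministic at the $\mb{D}/\ov{\mb{D}}$ level and classical there (with $\B$-subformulas treated as atoms, which is all the collapse needs, since $\B$ itself is only class-deterministic in the direction that designation of $\B\vp$ forces designation of $\vp$). If anything, your induction on the derivation is the more accurate framing than the paper's stated induction on the complexity of $\vp$, and your closing caveat about entries not straddling the designated/non-designated boundary is precisely the point that makes the collapse legitimate.
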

\begin{proof}
Straightforward induction on the complexity of $\vp$.
\end{proof}

Prior to demonstrating completeness, we will begin by revisiting the standard concepts typically employed in proving completeness.

\begin{df}
Let $\Gamma\subseteq \mc{FR}$, then $\Gamma$ is an $\mb{L}$-\emph{consistent} set iff $\Gamma\not\vdash_{\mb{L}} \vp$ or $\Gamma\not\vdash_{\mb{L}} \neg \vp$ for all $\vp\in \mc{FR}$. $\Gamma$ is $\mb{L}$-\emph{inconsistent} otherwise.
\end{df}

\begin{df}
Let $\Gamma\subseteq \mc{FR}$, then $\Gamma$ is \emph{maximal} $\mb{L}$-\emph{consistent} set iff $\Gamma$ is $\mb{L}$-\emph{consistent} and any set of formulas properly containing $\Gamma$ is $\mb{L}$-inconsistent. If $\Gamma$ is maximal $\mb{L}$-consistent set, then we say that $\Gamma$ is an $\mb{L}$-$\mathtt{mxc}$.
\end{df}

The following well-known lemma is valid for any Tarskian and finitary logic. In particular, it is valid for all logics considered in this paper.

\begin{fct}
For any $\Gamma\cup \{ \vp \}\subseteq \mc{FR}$, if $\Gamma\not\vdash_{\mb{L}} \vp$, then, there is a $\Gamma^{\mt{mxc}} \supseteq \Gamma$ such that $\Gamma^{\mt{mxc}}$ is $\mb{L}$-$\mathtt{mxc}$ and $\Gamma^{\mt{mxc}}\not\vdash_{\mb{L}}\vp$.
\end{fct}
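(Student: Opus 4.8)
The plan is to run a standard Lindenbaum-style construction, but one that preserves the property ``$\not\vdash_{\mathbf{L}} \vp$'' at every stage, and then to argue that the resulting set is automatically maximal $\mathbf{L}$-consistent. Since $\mathtt{Var}$ is countable, $\mc{FR}$ is countable, so I would fix an enumeration $\psi_0, \psi_1, \psi_2, \dots$ of all formulas. Set $\Gamma_0 = \Gamma$ and define recursively
\[
\Gamma_{n+1} =
\begin{cases}
\Gamma_n \cup \{\psi_n\} & \text{if } \Gamma_n \cup \{\psi_n\} \not\vdash_{\mathbf{L}} \vp,\\
\Gamma_n & \text{otherwise,}
\end{cases}
\]
and put $\Gamma^{\mt{mxc}} = \bigcup_{n} \Gamma_n$. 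By construction each $\Gamma_n \not\vdash_{\mathbf{L}} \vp$: the base case is exactly the hypothesis $\Gamma \not\vdash_{\mathbf{L}} \vp$, and each step only adds a formula when the property is retained.

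First I would verify that $\Gamma^{\mt{mxc}} \not\vdash_{\mathbf{L}} \vp$, which uses finitarity: a derivation of $\vp$ from $\Gamma^{\mt{mxc}}$ would use only finitely many premises, all of which lie in some single $\Gamma_n$, whence $\Gamma_n \vdash_{\mathbf{L}} \vp$, contradicting the invariant. The construction also yields a maximality property relative to $\vp$: if $\psi_n \notin \Gamma^{\mt{mxc}}$, then at stage $n$ we must have had $\Gamma_n \cup \{\psi_n\} \vdash_{\mathbf{L}} \vp$, and by monotonicity $\Gamma^{\mt{mxc}} \cup \{\psi_n\} \vdash_{\mathbf{L}} \vp$.

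The crux of the argument — and the step I expect to require the most care — is to convert this ``maximal non-prover of $\vp$'' into a maximal $\mathbf{L}$-consistent set in the sense of the paper's definition. Here I would invoke the classical propositional base (axioms \textsf{p1}--\textsf{p3} with \textsf{MP}), which delivers the deduction theorem, proof by cases, and \emph{ex falso}; this is the only logic-specific input, and it is shared by every system treated in the paper. For completeness, if both $\chi \notin \Gamma^{\mt{mxc}}$ and $\n\chi \notin \Gamma^{\mt{mxc}}$, then the maximality-relative-to-$\vp$ property gives $\Gamma^{\mt{mxc}} \cup \{\chi\} \vdash_{\mathbf{L}} \vp$ and $\Gamma^{\mt{mxc}} \cup \{\n\chi\} \vdash_{\mathbf{L}} \vp$; the deduction theorem together with the tautology $(\chi \jt \vp) \jt ((\n\chi \jt \vp) \jt \vp)$ then forces $\Gamma^{\mt{mxc}} \vdash_{\mathbf{L}} \vp$, a contradiction, so $\chi \in \Gamma^{\mt{mxc}}$ or $\n\chi \in \Gamma^{\mt{mxc}}$ for every $\chi$. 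For consistency, if $\Gamma^{\mt{mxc}} \vdash_{\mathbf{L}} \chi$ and $\Gamma^{\mt{mxc}} \vdash_{\mathbf{L}} \n\chi$ held for some $\chi$, then \emph{ex falso} would again give $\Gamma^{\mt{mxc}} \vdash_{\mathbf{L}} \vp$, contradicting the previous paragraph.

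Finally, consistency together with completeness yields maximality: given any $\Delta \supsetneq \Gamma^{\mt{mxc}}$, pick $\psi \in \Delta \setminus \Gamma^{\mt{mxc}}$; by completeness $\n\psi \in \Gamma^{\mt{mxc}} \subseteq \Delta$, so $\Delta$ proves both $\psi$ and $\n\psi$ and is therefore $\mathbf{L}$-inconsistent. Hence $\Gamma^{\mt{mxc}}$ is maximal $\mathbf{L}$-consistent, contains $\Gamma$, and does not prove $\vp$, as required. Everything apart from the classical propositional fragment is the generic Tarskian-finitary machinery (monotonicity plus finite-character of $\vdash$), which is why the fact holds for all logics considered here.
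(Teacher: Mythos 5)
Your proof is correct. Note that the paper itself offers no proof of this Fact at all --- it is stated as a ``well-known lemma'' valid for any Tarskian and finitary logic --- so what you have done is supply the standard Lindenbaum construction the paper leaves implicit, and you have done so accurately: the finitarity argument for $\Gamma^{\mt{mxc}} \not\vdash_{\mb{L}} \vp$, the maximality-relative-to-$\vp$ property, and the conversion to negation-completeness via the deduction theorem and $(\chi \jt \vp) \jt ((\n\chi \jt \vp) \jt \vp)$ are all sound. Two remarks. First, your reliance on the deduction theorem is legitimate here but deserves the explicit justification you only gesture at: it holds unrestrictedly for every logic in the paper because \textsf{MP} is the \emph{only} rule in $\mathbf{W}$ and its extensions (there is no necessitation rule, which is what usually complicates the deduction theorem in modal settings), so the usual induction using \textsf{p1} and \textsf{p2} goes through. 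Second, you have in effect sharpened the paper's claim: a bare Tarskian finitary logic only yields, via Lindenbaum, a set maximal with respect to not proving $\vp$; converting that into maximality and consistency in the paper's negation-based sense genuinely requires the classical propositional base (proof by cases and \emph{ex falso}), exactly as you flag. So your proof is slightly less general than the paper's ``any Tarskian and finitary logic'' phrasing but is precisely as general as the paper actually needs, since every system treated contains \textsf{p1}--\textsf{p3} with \textsf{MP}.
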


The proof of completeness is based on one key lemma:\footnote{We will follow the strategy suggested by \cite{Omori2016,Pawlowski2022a} }

\begin{lm}[Valuation lemma]

If $\Gamma$ is $\mb{W}$-$\mathtt{mxc}$, then  the function $\Theta_\Gamma: \mc{FR} \mapsto \mathtt{Val}$ defined in the following way:
\[ 
\Theta_\Gamma(\vp) = 
\begin{cases}
\mathbf{P} & \text{ iff } \Gamma \vdash_{\mathbf{W}} \varphi \text{ and } \Gamma \vdash_{\mathbf{W}} \Box \varphi  \\
\mathbf{t} & \text{ iff } \Gamma \vdash_{\mathbf{W}} \varphi \text{ and } \Gamma \vdash_{\mathbf{W}} \n\Box \varphi   \\
\mathbf{f} & \text{ iff } \Gamma \vdash_{\mathbf{W}} \n\varphi \text{ and } \Gamma \vdash_{\mathbf{W}} \n\Box \n \varphi   \\
\mathbf{R} & \text{ iff } \Gamma \vdash_{\mathbf{W}} \n\varphi \text{ and } \Gamma \vdash_{\mathbf{W}} \B\n\varphi  \\
\end{cases}
\]
is well-defined $\mathbf{W}$-valuation. 
\end{lm}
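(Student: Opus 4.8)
The plan is to verify two independent things: first, that $\Theta_\Gamma$ assigns to every formula \emph{exactly one} value (well-definedness = existence $+$ uniqueness), and second, that this assignment lands in the prescribed set under each of the three non-deterministic tables of $\mathbf{W}$ (so that it is genuinely a $\mathbf{W}$-valuation). Throughout I would exploit that, since $\mathbf{W}$ contains the classical propositional base $\textsf{p1}$--$\textsf{p3}$ with $\textsf{MP}$, any $\mathbf{W}$-$\mathtt{mxc}$ set $\Gamma$ is deductively closed and \emph{complete}: for every formula $\chi$ exactly one of $\Gamma\vdash_{\mathbf{W}}\chi$ and $\Gamma\vdash_{\mathbf{W}}\n\chi$ holds. (Both cannot hold by consistency; if neither held, $\Gamma\cup\{\chi\}$ would be a consistent proper extension of $\Gamma$, contradicting maximality.)

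The single most useful consequence of the definition is the \emph{designation dictionary}
\[
\Theta_\Gamma(\chi)\in\mathbf{D} \iff \Gamma\vdash_{\mathbf{W}}\chi,
\qquad
\Theta_\Gamma(\chi)\in\ov{\mathbf{D}} \iff \Gamma\vdash_{\mathbf{W}}\n\chi ,
\]
which is immediate since the two designated cases $\mathbf{P},\mathbf{t}$ both demand $\Gamma\vdash_{\mathbf{W}}\chi$ while the two antidesignated cases $\mathbf{f},\mathbf{R}$ both demand $\Gamma\vdash_{\mathbf{W}}\n\chi$. Well-definedness then splits cleanly. For existence, given $\varphi$, completeness decides $\varphi$ versus $\n\varphi$; in the first case completeness applied to $\B\varphi$ selects $\mathbf{P}$ or $\mathbf{t}$, and in the second case completeness applied to $\B\n\varphi$ selects $\mathbf{R}$ or $\mathbf{f}$, so at least one clause fires. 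Uniqueness is just consistency: the designated and antidesignated clauses are separated by $\varphi$ against $\n\varphi$, while within each pair $\mathbf{P}/\mathbf{t}$ and $\mathbf{R}/\mathbf{f}$ are separated by $\B\varphi$ against $\n\B\varphi$ (respectively $\B\n\varphi$ against $\n\B\n\varphi$), and $\Gamma$ cannot derive a formula together with its negation.

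Next I would check the negation and implication clauses directly from the dictionary. For negation, $\Gamma\vdash_{\mathbf{W}}\varphi$ yields $\Gamma\vdash_{\mathbf{W}}\n\n\varphi$, so $\Theta_\Gamma(\n\varphi)\in\ov{\mathbf{D}}=\ov{\neg}(\mathbf{P})=\ov{\neg}(\mathbf{t})$, and symmetrically $\Gamma\vdash_{\mathbf{W}}\n\varphi$ gives $\Theta_\Gamma(\n\varphi)\in\mathbf{D}=\ov{\neg}(\mathbf{f})=\ov{\neg}(\mathbf{R})$; the negation table is respected. For implication the table entry is $\ov{\mathbf{D}}$ exactly when the antecedent is designated and the consequent antidesignated, and is $\mathbf{D}$ in every other case, so I would match it by three classical derivations inside $\Gamma$: if $\Gamma\vdash_{\mathbf{W}}\psi$ then $\textsf{p1}$ gives $\Gamma\vdash_{\mathbf{W}}\varphi\jt\psi$; if $\Gamma\vdash_{\mathbf{W}}\n\varphi$ then the classical theorem $\n\varphi\jt(\varphi\jt\psi)$ gives $\Gamma\vdash_{\mathbf{W}}\varphi\jt\psi$; and if both $\Gamma\vdash_{\mathbf{W}}\varphi$ and $\Gamma\vdash_{\mathbf{W}}\n\psi$ then $\Gamma\vdash_{\mathbf{W}}\n(\varphi\jt\psi)$. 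Applying the dictionary to $\varphi\jt\psi$ then places $\Theta_\Gamma(\varphi\jt\psi)$ in the correct half, and these three subcases are exhaustive.

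The step I expect to carry the real weight is the box clause, which is precisely where axiom $\textsf{T}$ is forced to appear. When $\Theta_\Gamma(\varphi)=\mathbf{P}$ or $\mathbf{t}$ the required membership $\Theta_\Gamma(\B\varphi)\in\ov{\B}(\Theta_\Gamma(\varphi))$ is read off the definition itself (these clauses stipulate $\Gamma\vdash_{\mathbf{W}}\B\varphi$, respectively $\Gamma\vdash_{\mathbf{W}}\n\B\varphi$, matching $\ov{\B}(\mathbf{P})=\mathbf{D}$ and $\ov{\B}(\mathbf{t})=\ov{\mathbf{D}}$). The genuine content lies in $\Theta_\Gamma(\varphi)=\mathbf{f}$ and $\Theta_\Gamma(\varphi)=\mathbf{R}$, where the table demands $\Theta_\Gamma(\B\varphi)\in\ov{\mathbf{D}}$, i.e.\ $\Gamma\vdash_{\mathbf{W}}\n\B\varphi$, whereas the defining conditions only supply information about $\B\n\varphi$. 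Here I would contrapose $\textsf{T}$: from $\B\varphi\jt\varphi$ classical logic gives $\n\varphi\jt\n\B\varphi$, and since both the $\mathbf{f}$- and $\mathbf{R}$-clauses provide $\Gamma\vdash_{\mathbf{W}}\n\varphi$, $\textsf{MP}$ delivers $\Gamma\vdash_{\mathbf{W}}\n\B\varphi$ as needed. This is the only place the modal axiom is invoked, which fits the reading of $\mathbf{W}$ as the minimal modal logic validating $\textsf{T}$.
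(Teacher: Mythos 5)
Your proof is correct and follows essentially the same route as the paper's: you verify each connective's table condition directly from the defining clauses of $\Theta_\Gamma$ using classical derivations inside $\Gamma$, with the contraposed axiom \textsf{T} supplying $\Gamma\vdash_{\mathbf{W}}\n\B\vp$ in the $\mathbf{f}$ and $\mathbf{R}$ cases of the box, exactly where the paper invokes \textsf{T}. You additionally spell out two points the paper leaves implicit---the existence and uniqueness of the assigned value via maximal consistency, and the fact that the $\mathbf{f}$-case of $\B\vp$ genuinely needs the contraposition of \textsf{T} (the paper lumps it together with the $\mathbf{t}$-case, where $\n\B\vp$ is available by definition)---which is a sharpening of the same argument rather than a different one.
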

\begin{proof}
We prove it by induction on the complexity of $\lambda$. The base case is straightforward. Assume the theorem holds for $\vp, \psi$. We will prove it for $\B\vp,\n\vp$ and $\vp\jt\psi$. 

\begin{enumerate}
\item $\lambda:= \n\vp$. We start with $\Theta_\Gamma(\vp)\in \mt{D}$. By definition this means that $\Gamma\vdash_{\mathsf{W}} \vp$. By classical propositional logic we get $\Gamma\vdash_{\mathsf{W}} \n\n\vp$. So, we get $\Theta_\Gamma(\n\vp)\in \ov{\mt{D}}$. If  $\Theta_\Gamma(\vp)\in \ov{\mt{D}}$, then we have  $\Gamma\vdash_{\mathsf{W}} \n\vp$ and so $\Theta_\Gamma(\n\vp)\in \ov{\mt{D}}$.
\item $\lambda:= \vp\jt \psi$. If $\Theta_\Gamma(\psi)\in \mt{D}$ by definition we have  $\Gamma\vdash_{\mathsf{W}} \psi$ and so by classical propositional logic $\Gamma\vdash_{\mathsf{W}} \vp \jt\psi$ giving us  $\Theta_\Gamma(\vp\jt \psi)\in \mt{D}$. If $\Theta_\Gamma(\psi)\in \ov{\mt{D}}$ and $\Theta_\Gamma(\vp)\in \mt{D}$, we have $\Gamma\vdash_{\mathsf{W}} \n\psi$ and $\Gamma\vdash_{\mathsf{W}} \vp$ giving us $\Gamma\vdash_{\mathsf{W}}\n(\vp \jt \psi)$, ultimately resulting in $\Theta_\Gamma(\vp\jt \psi)\in \ov{\mt{D}}$. If $\Theta_\Gamma(\psi)\in \ov{\mt{D}}$ and $\Theta_\Gamma(\vp)\in \ov{\mt{D}}$ we have $\Gamma\vdash_{\mathsf{W}} \n \vp$, $\Gamma\vdash_{\mathsf{W}} \n\psi$ and so $\Gamma\vdash_{\mathsf{W}} \vp \jt \psi$, giving  $\Theta_\Gamma(\vp\jt \psi)\in \mt{D}$.
\item $\lambda:= \B\vp$. We start with $\Theta_\Gamma(\vp)\in \{\mt{t,f}\}$. So $\Gamma\vdash_{\mathsf{W}} \n\B\vp$ which gives us $\Theta_\Gamma(\B\vp)\in \ov{\mt{D}}$. If  $\Theta_\Gamma(\vp)=\mt{R}$, we have $\Gamma\vdash_{\mathsf{W}} \B\n\vp$ and by axiom $\mathsf{T}$ and classical propositional logic, $\Gamma\vdash_{\mathsf{W}} \n\B\vp$, so $\Theta_\Gamma(\B\vp)\in \ov{\mt{D}}$. The last case is $\Theta_\Gamma(\vp)=\mt{P}$. We have $\Gamma\vdash_{\mathsf{W}} \B\vp$, so $\Theta_\Gamma(\B\vp)\in \mt{D}$.

\end{enumerate} 

\end{proof}

\begin{thm}
$\Gamma \vDash_{\mathbf{W}}\vp$ iff $\Gamma \vdash_{\mathbf{W}}\vp$, 
\end{thm}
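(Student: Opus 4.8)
The plan is to prove the two directions separately, with soundness already in hand (Theorem~1) giving us $\Gamma \vdash_{\mathbf{W}} \vp \Rightarrow \Gamma \vDash_{\mathbf{W}} \vp$. So the real work is completeness, i.e.\ the contrapositive of $\Gamma \vDash_{\mathbf{W}} \vp \Rightarrow \Gamma \vdash_{\mathbf{W}} \vp$. First I would assume $\Gamma \not\vdash_{\mathbf{W}} \vp$ and aim to construct a single $\mathbf{W}$-valuation $v$ that makes every member of $\Gamma$ designated while sending $\vp$ to an undesignated value, thereby witnessing $\Gamma \not\vDash_{\mathbf{W}} \vp$.

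The construction proceeds through the standard Lindenbaum step and the Valuation Lemma just proved. Since $\Gamma \not\vdash_{\mathbf{W}} \vp$, by Fact~1 I extend $\Gamma$ to a $\mathbf{W}$-$\mathtt{mxc}$ set $\Gamma^{\mt{mxc}}$ with $\Gamma^{\mt{mxc}} \supseteq \Gamma$ and $\Gamma^{\mt{mxc}} \not\vdash_{\mathbf{W}} \vp$. I then take $v := \Theta_{\Gamma^{\mt{mxc}}}$, which the Valuation Lemma guarantees is a well-defined $\mathbf{W}$-valuation. The crucial bookkeeping is to check two claims against the defining clauses of $\Theta$: for any $\psi$, we have $\Theta_{\Gamma^{\mt{mxc}}}(\psi) \in \mathtt{D}$ iff $\Gamma^{\mt{mxc}} \vdash_{\mathbf{W}} \psi$. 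This follows because the designated values $\mathbf{P}$ and $\mathbf{t}$ are exactly the two cases whose first conjunct is $\Gamma^{\mt{mxc}} \vdash_{\mathbf{W}} \psi$, whereas $\mathbf{f}$ and $\mathbf{R}$ both require $\Gamma^{\mt{mxc}} \vdash_{\mathbf{W}} \n\psi$; maximal consistency ensures exactly one of $\psi, \n\psi$ is derivable, so the two designated cases cover precisely the derivable formulas.

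With this equivalence established, the conclusion is immediate: every $\psi \in \Gamma \subseteq \Gamma^{\mt{mxc}}$ satisfies $\Gamma^{\mt{mxc}} \vdash_{\mathbf{W}} \psi$, hence $v(\psi) \in \mathtt{D}$; and since $\Gamma^{\mt{mxc}} \not\vdash_{\mathbf{W}} \vp$ we get $v(\vp) \in \ov{\mathtt{D}}$. Thus $v$ is a valuation designating all of $\Gamma$ but not $\vp$, so $\Gamma \not\vDash_{\mathbf{W}} \vp$, completing the contrapositive.

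I expect the main subtlety to lie not in the logical architecture, which is the textbook Henkin/Lindenbaum pattern, but in verifying that $\Theta$ is genuinely total and single-valued on a $\mathbf{W}$-$\mathtt{mxc}$ set — that for each $\psi$ exactly one of the four clauses applies. This rests on the maximal-consistent set deciding each of $\psi$, $\B\psi$, and $\n\B\psi$ coherently (e.g.\ ruling out $\Gamma^{\mt{mxc}} \vdash \B\psi$ together with $\Gamma^{\mt{mxc}} \vdash \n\psi$, which axiom~\textsf{T} forbids). Much of this totality/exclusivity is precisely the content already discharged inside the Valuation Lemma, so here I would simply invoke it and keep the remaining argument to the short designation-equivalence check described above.
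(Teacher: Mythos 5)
Your proposal is correct and follows exactly the route the paper intends: its proof of this theorem is the one-line remark ``by contraposition, in light of the valuation lemma,'' and you have simply made explicit the standard Lindenbaum extension via Fact~1, the choice $v := \Theta_{\Gamma^{\mt{mxc}}}$, and the designation-equivalence check $\Theta_{\Gamma^{\mt{mxc}}}(\psi) \in \mathtt{D}$ iff $\Gamma^{\mt{mxc}} \vdash_{\mathbf{W}} \psi$. Your closing observation that totality and exclusivity of the four clauses (including ruling out $\vdash \B\psi$ together with $\vdash \n\psi$ via axiom \textsf{T}) is the well-definedness content of the valuation lemma is also accurate.
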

\begin{proof}
The proof is by contraposition and in the light of the valuation lemma is straightforward.
\end{proof}

This logic is going to be our starting point. Any other logic that we will consider in this paper is obtained by strengthening the interpretation of some of the connectives. So, by making connectives more deterministic we enforce the validity of additional axioms. Before we proceed with the study of the extensions let us note two  crucial remarks.

\begin{rmk}
In this particular setting it is possible to simplify the valuation function. In the light of validity of $\B\vp \jt \vp$ and $\B\n\vp \jt \n\vp$ the simplified valuation function is:
\[ 
\Theta^{\mathtt{S}}_\Gamma(\vp) = 
\begin{cases}
\mathbf{P} & \text{ iff } \Gamma \vdash_{\mathbf{W}} \Box \varphi  \\
\mathbf{t} & \text{ iff } \Gamma \vdash_{\mathbf{W}} \varphi \text{ and } \Gamma \vdash_{\mathbf{W}} \n\Box \varphi   \\
\mathbf{f} & \text{ iff } \Gamma \vdash_{\mathbf{W}} \n\varphi \text{ and } \Gamma \vdash_{\mathbf{W}} \n\Box \n \varphi   \\
\mathbf{R} & \text{ iff }  \Gamma \vdash_{\mathbf{W}} \B\n\varphi  \\
\end{cases}
\]
\end{rmk}

\begin{rmk}
Importantly, the presented axiomatization is also sound and complete with respect to the following interpretation of connectives:\\

\begin{table}[h!]
\centering
\begin{multicols}{3}

  \begin{center}
  \begin{tabular}{|l|l|}
    \toprule
    $\n$   & $\varphi$  \\ \midrule
    $\ov{\mb{D}}$ & $\mb{P}$  \\ \hline
    $\ov{\mb{D}}$ & $\mb{t}$  \\ \hline
    $\mb{t}$ & $\mb{f}$  \\ \hline
    $\mt{P}$ & $\mb{R}$  \\
    \bottomrule
  \end{tabular}
\end{center}

\columnbreak

  \begin{tabular}{|l|l|}
    \toprule
    $\B\varphi$   & $\varphi$  \\ \midrule
    $\mb{D}$      & $\mb{P}$  \\ \hline
    $\ov{\mb{D}}$ & $\mb{t}$  \\ \hline
    $\ov{\mb{D}}$ & $\mb{f}$  \\ \hline
    $\ov{\mb{D}}$ & $\mb{R}$  \\
    \bottomrule
  \end{tabular}

\columnbreak

  \begin{tabular}{|l|l|l|l|l|}
    \toprule
    $\jt$    & $\mb{P}$ & $\mb{t}$ & $\mb{f}$ & $\mb{R}$ \\ \midrule
    $\mb{P}$ & $\mb{D}$ & $\mb{D}$ & $\ov{\mb{D}}$ & $\ov{\mb{D}}$  \\ \hline
    $\mb{t}$ & $\mb{D}$ & $\mb{D}$ & $\ov{\mb{D}}$ & $\ov{\mb{D}}$   \\ \hline
    $\mb{f}$ & $\mb{D}$ & $\mb{D}$ & $\mb{D}$ & $\mb{D}$   \\ \hline
    $\mb{R}$ & $\mb{D}$ & $\mb{D}$ & $\mb{D}$ & $\mb{D}$  \\
    \bottomrule
  
  \end{tabular}
  \end{multicols}
  \caption{Simplified semantics}

\end{table}
\noindent To see this consider the appropriate case in the proof  regarding the valuation function i.e. $\lambda:= \n\vp$. We assume that $\Theta_\Gamma(\vp)= \mt{R}$. By definition this means that $\Gamma\vdash_{\mathbf{W}} \B\n\vp$. We need to show that $\Theta_\Gamma(\n\vp)= \mt{P}$. Thus, we need to show that $\Gamma\vdash_{\mathbf{W}} \B\n\vp$ which we do have by definition. For the other case assume that $\Theta_\Gamma(\vp)= \mt{f}$ which means that $\Gamma\vdash_{\mathbf{W}} \n\vp$ and $\Gamma\vdash_{\mathbf{W}} \n\B\n\vp$. We need to show that $\Theta_\Gamma(\n\vp)= \mt{t}$ i.e. we need to check if $\Gamma\vdash_{\mathbf{W}} \n\vp$ and $\Gamma\vdash_{\mathbf{W}} \n\B\n\vp$ which indeed is the case.

\end{rmk}
\noindent Moving forward, we will adopt the strengthened semantics approach. It is important to note that any further enhancement of this semantics will necessitate the introduction of new axioms. Before we dive into the specific axioms, let us outline the methodology we use to derive them. Notice that the meaning of the values used in this semantics is codified by the valuation lemma. We will exploit this fact by correlating each strengthening of the interpretation of a given connective with a formula that is obtained by directly translating the meaning of the values into the object language. To illustrate this more effectively, let us examine the negation and specifically its first row:

\begin{center} 
If $v(\vp)=\mt{P}$, then $v(\n\vp)=\mt{R}$.
\end{center}
\noindent The values are deciphered using their designated meanings from the revised version of the valuation lemma. This means $\mt{P}$, indicating provability and truth, is expressed as $\B\vp$, and $\n\vp$, holding the value $\mt{R}$, is interpreted as $\B\n\n\vp$. These interpretations are then intertwined through an implication, forming the axiom:

\[\B\vp \jt \B\n\n\vp\]

\noindent All the possible variants of the interpretations of negation are summarize in the following table:\footnote{Slightly similar investigations solely focused on $\Box$ has been done in \cite{Pawlowski2022a}.}

\begin{table}[h!]
\begin{center}
\begin{tabular}{|l|l|l|l|l|}
\hline
Name&$\vp$    & $\n\vp$  & Axiom                                         & Simplification \\ \hline
$\mb{N1}$ & $\mt{P}$ & $\mt{R}$ & $\B\vp \jt \B\n\n\vp $                         & None \\ \hline
$\mb{N2}$ & $\mt{P}$ & $\mt{f}$ & $\B\vp  \jt \n\B\n\n\vp \et \n\n\vp $          & $\B\vp \jt \n\B\n\n\vp $    \\ \hline
$\mb{N3}$ & $\mt{t}$ & $\mt{R}$ & $\n\B\vp \et \vp \jt \B\n\n\vp$                &  None    \\ \hline
$\mb{N4}$ & $\mt{t}$ & $\mt{f}$ & $\n\B\vp \et \vp \jt \n\B\n\n\vp \et \n\n\vp$  & $\n\B\vp \et \vp \jt \n\B\n\n\vp $      \\ \hline
 \end{tabular}
 \end{center}
\caption{Various strengthenings of negation}
\end{table}
\noindent While certain extensions fail to yield profound interpretations of $\B$ as a modality, others, like axioms $\mb{N1,N4}$, are provable in modal logic \textbf{K}. This is attributed to the validity  of the substitution of provably equivalent formulas within the scope of a modal operator. This principle is valid in Kripke semantics and neighborhood semantics. In contrast, non-deterministic semantics do not inherently validate this principle so the semantics can make a distinction between provably equivalent formulas. This distinction allows non-deterministic semantics to serve as a specialized instrument for investigating hyper-intensionality.\footnote{Worth noticing that this perspective has yet to be developed.}\footnote{All the calculations regarding the Kripke frame conditions have been done using \url{https://store.fmi.uni-sofia.bg/fmi/logic/sqema/sqema_gwt_20180317_2/index.html}.}

Axiom $\mb{N3}$ is  slightly more interesting. First, it is not a theorem of \textbf{S5} but it seems to have some modal meaning. Frame condition-wise it is equivalent to the following:
\[ \forall_{x,y} \, xRy \jt x=y \]

The last axiom is $\mb{N2}$ for which we were unable to track down the corresponding frame condition.

We are going to proceed with this methodology and apply it to the case of $\B$. This generate the following table:

\begin{table}[h!]
\begin{center}
\begin{tabular}{|l|l|l|l|l|}
\hline
Name      & $\vp$    & $\B\vp$  & Axiom                                            & Simplification \\ \hline
$\mb{B1}$ & $\mt{P}$ & $\mt{P}$ & $\B\vp \jt \B\B\vp $                             & None \\ \hline
$\mb{B2}$ & $\mt{P}$ & $\mt{t}$ & $\B\vp  \jt \B\vp \et \n\B\B\vp $                & $\B\vp \jt \n\B\B\vp$ \\ \hline
$\mb{B3}$ & $\mt{t}$ & $\mt{R}$ & $\n\B\vp \et \vp \jt \B\n\B\vp$                  & None    \\ \hline
$\mb{B4}$ & $\mt{t}$ & $\mt{f}$ & $\n\B\vp \et \vp \jt \n\B\vp \et \n\B\n\B\vp$      & $\n\B\vp \et \vp \jt \n\B\n\B\vp$   \\ \hline
$\mb{B5}$ & $\mt{f}$ & $\mt{R}$ & $\n\B\n\vp \et \n\vp \jt \B\n\B\vp$              & None   \\ \hline
$\mb{B6}$ & $\mt{f}$ & $\mt{f}$ & $\n\B\n\vp \et \n\vp \jt \n\B\vp \et \n\B\n\B\vp $ & None       \\ \hline
$\mb{B7}$ & $\mt{R}$ & $\mt{R}$ & $\B\n\vp \jt \B\n\B\vp$                        & None   \\ \hline
$\mb{B8}$ & $\mt{R}$ & $\mt{f}$ & $\B\n\vp \jt \n\B\vp \et \n\B\n\B\vp$            & None   \\ \hline
 \end{tabular}
 \end{center}
\caption{Various strengthenings of $\B$}
\end{table}

\noindent $\mb{B1}$ is a well-known modal axiom, axiom \textbf{4}. Frame wise this axiom is equivalent to reflexivity.

Axiom $\mb{B2}$ modally speaking is not interesting and seems not to correspond to any interesting frame condition. $\mb{B3}$ is a theorem of \textbf{S4} and induces the following condition on frames:
\[\forall_{x,y,z} \,(xRy \jt (xRz \jt (x = z \vee  yRz)))\]

Axiom $\mb{B4}$ is responsible for the following:

\[\forall_{x,y,z,z_1} \, (xRy \jt x = y) \vee  \exists z(xRz \land (zRz_1 \jt x = z_1))\]

Axiom $\mb{B5}$ is provable in the modal logic \textbf{B} and is equivalent to the following frame-condition:

\[ \forall_{x,y,z} \,((xRy \land  xRz) \jt (x =z  \vee  yRx))\]

Axiom $\mb{B6}$ is not provable in \textbf{S5} and is equivalent to the following condition:
\[\forall_{x,y_1,y_2}\, (xRy_1\jt  (x = y_1 \vee  \exists_{z_1} \,(xRz_1 \land  \forall_{z_2} \, (z_1Rz_2 \jt y_1 = z_2)))) \land  (xRy_2 \jt (x = y_2 \vee  xRx))
 \]

Axiom $\mb{B7}$ is provable in \textbf{T} and modally gives:

\[\forall_{x,y} \,(xRy  \jt  \exists_{z} \, (xRz \land  yRz))\]
and $\mb{B8}$ is not provable in \textsf{S5} and results in the following condition:

\[\forall_{x}\exists_{y}\, xRy \land  \exists_{z}\, z(xRz \land  \forall_{z_1} \,\neg zRz_1)\]

The last part of this study has to do with various strengthenings of the conditional. This is the most tricky part, since it involves the most combinations. We start with the following labeling, together with the axiom and some additional info about a given axiom.

\begin{table}[h!]
\tiny
\begin{center}
\begin{tabular}{|l|l|l|l|l|l|}
\hline
Name                           & $ \vp $    & $ \psi  $   & $ \vp  \jt  \psi  $ & Axiom                                                                              & Remark \\ \hline
$\mb{I}_{{\mt{P,P}}}^{\mt{P}}$ & $\mt{P}$ & $\mt{P}$ & $\mt{P}$ & $\B \vp  \et \B  \psi   \jt \B( \vp  \jt  \psi  )$                                                 & Provable in \textbf{K} \\ \hline
$\mb{I}_{{\mt{P,P}}}^{\mt{t}}$ & $\mt{P}$ & $\mt{P}$ & $\mt{t}$ & $\B \vp  \et \B  \psi  \jt \n\B( \vp  \jt  \psi  ) \et ( \vp  \jt  \psi  )$                              & Not provable in \textbf{S5} \\ \hline
$\mb{I}_{{\mt{P,t}}}^{\mt{P}}$ & $\mt{P}$ & $\mt{t}$ & $\mt{P}$ & $\B \vp   \et \n\B  \psi   \et  \psi   \jt \B( \vp  \jt  \psi  )$                                      & Not provable in \textbf{S5} \\ \hline
$\mb{I}_{{\mt{P,t}}}^{\mt{t}}$ & $\mt{P}$ & $\mt{t}$ & $\mt{t}$ & $\B \vp   \et \n\B  \psi   \et  \psi   \jt \n\B( \vp  \jt  \psi  ) \et ( \vp  \jt  \psi  )$                  & Provable in \textbf{K}\\ \hline
$\mb{I}_{{\mt{P,f}}}^{\mt{f}}$ & $\mt{P}$ & $\mt{f}$ & $\mt{f}$ & $\B \vp   \et \n\B\n  \psi   \et \n \psi   \jt \n\B\n( \vp  \jt  \psi  ) \et \n( \vp  \jt  \psi  )$          & Provable in \textbf{T} \\ \hline
$\mb{I}_{{\mt{P,f}}}^{\mt{R}}$ & $\mt{P}$ & $\mt{f}$ & $\mt{R}$ & $\B \vp   \et \n\B\n  \psi   \et \n \psi   \jt \B\n( \vp  \jt  \psi  )$                                & Not provable in \textbf{S5} \\ \hline
$\mb{I}_{{\mt{P,R}}}^{\mt{f}}$ & $\mt{P}$ & $\mt{R}$ & $\mt{f}$ & $\B \vp   \et \B\n  \psi    \jt \n\B\n( \vp  \jt  \psi  ) \et \n( \vp  \jt  \psi  )$                     & Not provable in \textbf{S5}  \\ \hline
$\mb{I}_{{\mt{P,R}}}^{\mt{R}}$ & $\mt{P}$ & $\mt{R}$ & $\mt{R}$ & $\B \vp   \et \B\n  \psi    \jt \B\n( \vp  \jt  \psi  ) $                                          & Provable in \textbf{K} \\ \hline
$\mb{I}_{{\mt{t,P}}}^{\mt{P}}$ & $\mt{t}$ & $\mt{P}$ & $\mt{P}$ & $\n\B \vp  \et  \vp  \et \B  \psi    \jt \B( \vp  \jt  \psi  ) $                                     & Provable in \textbf{K}  \\ \hline
$\mb{I}_{{\mt{t,P}}}^{\mt{t}}$ & $\mt{t}$ & $\mt{P}$ & $\mt{t}$ & $\n\B \vp  \et  \vp  \et \B  \psi    \jt \n\B( \vp  \jt  \psi  ) \et ( \vp  \jt  \psi  )$                  & Not provable in \textbf{S5} \\ \hline
$\mb{I}_{{\mt{t,t}}}^{\mt{P}}$ & $\mt{t}$ & $\mt{t}$ & $\mt{P}$ & $\n\B \vp  \et  \vp  \et \n\B\  \psi   \et  \psi   \jt \B( \vp  \jt  \psi  ) $                           & Not provable in \textbf{S5} \\ \hline
$\mb{I}_{{\mt{t,t}}}^{\mt{t}}$ & $\mt{t}$ & $\mt{t}$ & $\mt{t}$ & $\n\B \vp  \et  \vp  \et \n\B\  \psi   \et  \psi   \jt \n\B( \vp  \jt  \psi  ) \et ( \vp  \jt  \psi  )$        & Provable in \textbf{K}  \\ \hline
$\mb{I}_{{\mt{t,f}}}^{\mt{f}}$ & $\mt{t}$ & $\mt{f}$ & $\mt{f}$ & $\n\B \vp  \et  \vp  \et \n\B\n  \psi   \et \n \psi   \jt \n\B\n( \vp  \jt  \psi  ) \et \n( \vp  \jt  \psi  )$ & Provable in \textbf{K} \\ \hline
$\mb{I}_{{\mt{t,f}}}^{\mt{R}}$ & $\mt{t}$ & $\mt{f}$ & $\mt{R}$ & $\n\B \vp  \et  \vp  \et \n\B\n  \psi   \et \n \psi   \jt \B\n( \vp  \jt  \psi  )$                       & Not provable in \textbf{S5}\\ \hline
$\mb{I}_{{\mt{t,R}}}^{\mt{f}}$ & $\mt{t}$ & $\mt{R}$ & $\mt{f}$ & $\n\B \vp  \et  \vp  \et \B\n  \psi    \jt \n\B\n( \vp  \jt  \psi  ) \et \n( \vp  \jt  \psi  )$            & Provable in \textbf{T} \\ \hline
$\mb{I}_{{\mt{t,R}}}^{\mt{R}}$ & $\mt{t}$ & $\mt{R}$ & $\mt{R}$ & $\n\B \vp  \et  \vp  \et \B\n  \psi    \jt \B\n( \vp  \jt  \psi  )$              & Not provable in \textbf{S5} \\ \hline
$\mb{I}_{{\mt{f,P}}}^{\mt{P}}$ & $\mt{f}$ & $\mt{P}$ & $\mt{P}$ & $\n\B\n \vp  \et \n \vp  \et \B  \psi    \jt \B( \vp  \jt  \psi  ) $                                 & Provable in \textbf{K}  \\ \hline
$\mb{I}_{{\mt{f,P}}}^{\mt{t}}$ & $\mt{f}$ & $\mt{P}$ & $\mt{t}$ & $\n\B\n \vp  \et \n \vp  \et \B  \psi    \jt \n\B( \vp  \jt  \psi  ) \et ( \vp  \jt  \psi  )$              & Not provable in \textbf{S5} \\ \hline
$\mb{I}_{{\mt{f,t}}}^{\mt{P}}$ & $\mt{f}$ & $\mt{t}$ & $\mt{P}$ & $\n\B\n \vp  \et \n \vp  \et \n\B  \psi   \et  \psi   \jt \B( \vp  \jt  \psi  ) $                        & Not provable in \textbf{S5}\\ \hline
$\mb{I}_{{\mt{f,t}}}^{\mt{t}}$ & $\mt{f}$ & $\mt{t}$ & $\mt{t}$ & $\n\B\n \vp  \et \n \vp  \et \n\B  \psi   \et  \psi   \jt \n\B( \vp  \jt  \psi  ) \et ( \vp  \jt  \psi  )$     & Not provable in \textbf{S5}\\ \hline
$\mb{I}_{{\mt{f,f}}}^{\mt{P}}$ & $\mt{f}$ & $\mt{f}$ & $\mt{P}$ & $\n\B\n \vp  \et \n \vp  \et \n\B\n  \psi   \et \n \psi   \jt \B( \vp  \jt  \psi  ) $                    & Not provable in \textbf{S5} \\ \hline
$\mb{I}_{{\mt{f,f}}}^{\mt{t}}$ & $\mt{f}$ & $\mt{f}$ & $\mt{t}$ & $\n\B\n \vp  \et \n \vp  \et \n\B\n  \psi   \et \n \psi   \jt \n\B( \vp  \jt  \psi  ) \et ( \vp  \jt  \psi  )$ & Not provable in \textbf{S5}\\ \hline
$\mb{I}_{{\mt{f,R}}}^{\mt{P}}$ & $\mt{f}$ & $\mt{R}$ & $\mt{P}$ & $\n\B\n \vp  \et \n \vp  \et \B\n  \psi    \jt \B( \vp  \jt  \psi  ) $                               & Not provable in \textbf{S5} \\ \hline
$\mb{I}_{{\mt{f,R}}}^{\mt{t}}$ & $\mt{f}$ & $\mt{R}$ & $\mt{t}$ & $\n\B\n \vp  \et \n \vp  \et \B\n  \psi    \jt \n\B( \vp  \jt  \psi  ) \et ( \vp  \jt  \psi  )$            & Provable in \textbf{K}    \\ \hline
$\mb{I}_{{\mt{R,P}}}^{\mt{P}}$ & $\mt{R}$ & $\mt{P}$ & $\mt{P}$ & $\B\n \vp  \et \B  \psi    \jt \B( \vp  \jt  \psi  ) $                                             & Provable in \textbf{K}   \\ \hline
$\mb{I}_{{\mt{R,P}}}^{\mt{t}}$ & $\mt{R}$ & $\mt{P}$ & $\mt{t}$ & $\B\n \vp  \et \B  \psi    \jt \n\B( \vp  \jt  \psi  ) \et ( \vp  \jt  \psi  )$                          & Not provable in \textbf{S5} \\ \hline
$\mb{I}_{{\mt{R,t}}}^{\mt{P}}$ & $\mt{R}$ & $\mt{t}$ & $\mt{P}$ & $\B\n \vp  \et \n\B  \psi   \et  \psi   \jt \B( \vp  \jt  \psi  ) $                                    & Provable in \textbf{K}    \\ \hline
$\mb{I}_{{\mt{R,t}}}^{\mt{t}}$ & $\mt{R}$ & $\mt{t}$ & $\mt{t}$ & $\B\n \vp  \et \n\B  \psi   \et  \psi   \jt \n\B( \vp  \jt  \psi  ) \et ( \vp  \jt  \psi  )$                 & Not provable in \textbf{S5}\\ \hline
$\mb{I}_{{\mt{R,f}}}^{\mt{P}}$ & $\mt{R}$ & $\mt{f}$ & $\mt{P}$ & $\B\n \vp  \et \n\B\n  \psi   \et \n \psi   \jt \B( \vp  \jt  \psi  )$                                 & Provable in \textbf{K}    \\ \hline
$\mb{I}_{{\mt{R,f}}}^{\mt{t}}$ & $\mt{R}$ & $\mt{f}$ & $\mt{t}$ & $\B\n \vp  \et \n\B\n  \psi   \et \n \psi   \jt \n\B( \vp  \jt  \psi  ) \et ( \vp  \jt  \psi  )$             & Not provable in \textbf{S5} \\ \hline
$\mb{I}_{{\mt{R,R}}}^{\mt{P}}$ & $\mt{R}$ & $\mt{R}$ & $\mt{P}$ & $\B\n \vp  \et \B\n  \psi    \jt \B( \vp  \jt  \psi  )$                                            & Provable in \textbf{K}  \\ \hline
$\mb{I}_{{\mt{R,R}}}^{\mt{t}}$ & $\mt{R}$ & $\mt{R}$ & $\mt{t}$ & $\B\n \vp  \et \B\n  \psi    \jt \n\B( \vp  \jt  \psi  ) \et ( \vp  \jt  \psi  )$                        & Not provable in \textbf{S5}\\ \hline
\end{tabular}
\end{center}
\caption{Various implications}
\end{table}
\noindent Based on the discussion, we can categorize the axioms into three distinct groups. The first group includes axioms that are typically provable with the presence of either axiom \textsf{K}, the rule of necessitation, or both. In a standard modal framework, these axioms do not offer any novel insights into modality. The second group is notably smaller, consisting of just two axioms. These are not provable in \textbf{K} but can be proven in \textbf{T}. The final group comprises axioms that are not provable in \textbf{S5}.

\begin{table}[h!]
\tiny
\label{Table1}
\begin{center}
\begin{tabular}{|l|l|l|}
\hline Name                            & Axiom                                                                         &Condition                              \\ \hline

$\mb{I}_{{\mt{P,t}}}^{\mt{P}}$ & $\B \vp   \et \n\B  \psi   \et  \psi   \jt \B( \vp  \jt  \psi  )$                                     & $\forall_{x,y} \,(xRy \jt x = y)$      \\ \hline
$\mb{I}_{{\mt{P,f}}}^{\mt{R}}$ & $\B \vp   \et \n\B\n  \psi   \et \n \psi   \jt \B\n( \vp  \jt  \psi  )$                               & $\forall_{x,y} \,(xRy \jt x = y)$                                                 \\ \hline
$\mb{I}_{{\mt{R,t}}}^{\mt{t}}$ & $\B\n \vp  \et \n\B  \psi   \et  \psi   \jt \n\B( \vp  \jt  \psi  ) \et ( \vp  \jt  \psi  )$                & $\forall_{x,y} \,(xRy \jt x = y)$    \\ \hline
$\mb{I}_{{\mt{f,P}}}^{\mt{t}}$ & $\n\B\n \vp  \et \n \vp  \et \B  \psi    \jt \n\B( \vp  \jt  \psi  ) \et ( \vp  \jt  \psi  )$             & $\forall_{x,y} \,(xRy \jt x = y)$ \\ \hline
$\mb{I}_{{\mt{f,R}}}^{\mt{P}}$ & $\n\B\n \vp  \et \n \vp  \et \B\n  \psi    \jt \B( \vp  \jt  \psi  ) $                              & $\forall_{x,y} \,(xRy \jt x = y)$ \\ \hline
$\mb{I}_{{\mt{f,R}}}^{\mt{t}}$ & $\n\B\n \vp  \et \n \vp  \et \B\n  \psi    \jt \n\B( \vp  \jt  \psi  ) \et ( \vp  \jt  \psi  )$           & $\forall_{x,y} \,(xRy \jt x = y \vee  xRy)$ \\ \hline
$\mb{I}_{{\mt{t,f}}}^{\mt{R}}$ & $\n\B \vp  \et  \vp  \et \n\B\n  \psi   \et \n \psi   \jt \B\n( \vp  \jt  \psi  )$                      & $\forall_{x,y,z} \,(xRy \jt (x = y \vee  xRz \jt x = z))$ \\ \hline
$\mb{I}_{{\mt{R,f}}}^{\mt{t}}$ & $\B\n \vp  \et \n\B\n  \psi   \et \n \psi   \jt \n\B( \vp  \jt  \psi  ) \et ( \vp  \jt  \psi  )$            & $\forall_{x,y,z} \,((xRy \jt x = y) \land  (xRz \jt (x = z \vee  xRx)))$      \\ \hline
$\mb{I}_{{\mt{P,f}}}^{\mt{f}}$ & $\B \vp   \et \n\B\n  \psi   \et \n \psi   \jt \n\B\n( \vp  \jt  \psi  ) \et \n( \vp  \jt  \psi  )$         & $\forall_{x,y,z} \,((xRy \jt (x = y \vee  xRy) \land  (xRz \jt (x = z \vee xRx))))$  \\ \hline
$\mb{I}_{{\mt{R,t}}}^{\mt{P}}$ & $\B\n \vp  \et \n\B  \psi   \et  \psi   \jt \B( \vp  \jt  \psi  ) $                                   & $\forall_{x,y,z} \,(xRy \jt (x = y \vee  xRy \vee (xRz \jt x = z)))$    \\ \hline
$\mb{I}_{{\mt{t,t}}}^{\mt{t}}$ & $\n\B \vp  \et  \vp  \et \n\B\  \psi   \et  \psi   \jt \n\B( \vp  \jt  \psi  ) \et ( \vp  \jt  \psi  )$       & $\forall_{x,y,z} \,((xRy \jt x = y) \vee  (xRz \jt (x = z \vee  (x = z \land  xRz))))$\\ \hline
$\mb{I}_{{\mt{t,f}}}^{\mt{f}}$ & $\n\B \vp  \et  \vp  \et \n\B\n  \psi   \et \n \psi   \jt \n\B\n( \vp  \jt  \psi  ) \et \n( \vp  \jt  \psi  )$& $\forall_{x,y,z} \,(xRy \jt (x = y \vee xRy \vee  (xRz \jt (x = z\vee  xRz))))$ \\  \hline
$\mb{I}_{{\mt{f,t}}}^{\mt{t}}$ & $\n\B\n \vp  \et \n \vp  \et \n\B  \psi   \et  \psi   \jt \n\B( \vp  \jt  \psi  ) \et ( \vp  \jt  \psi  )$    & $\forall_{x,y,z} \,(xRy \jt (x = y \vee  (xRz \jt (x = z \vee  (y= z \land  xRz)))))$\\ \hline
$\mb{I}_{{\mt{R,f}}}^{\mt{P}}$ & $\B\n \vp  \et \n\B\n  \psi   \et \n \psi   \jt \B( \vp  \jt  \psi  )$                                & $\forall_{x,y,z} \,(xRy \land  xRz \jt (x = z \vee  y =z  \vee  xRy)$   \\ \hline
$\mb{I}_{{\mt{f,f}}}^{\mt{t}}$ & $\n\B\n \vp  \et \n \vp  \et \n\B\n  \psi   \et \n \psi   \jt \n\B( \vp  \jt  \psi  ) \et ( \vp  \jt  \psi  )$& $\forall_{x,y,z} \,(xRy \land  xRz \jt (x = y \vee  x = z \vee  (x = y \land  xRx)))$\\ \hline
$\mb{I}_{{\mt{t,P}}}^{\mt{t}}$ & $\n\B \vp  \et  \vp  \et \B  \psi    \jt \n\B( \vp  \jt  \psi  ) \et ( \vp  \jt  \psi  )$                 & $\forall_{x,y,z} \,((xRx \vee  (xRy \jt x = y)) \land (xRz \jt x =z ))$ \\ \hline
$\mb{I}_{{\mt{t,R}}}^{\mt{f}}$ & $\n\B \vp  \et  \vp  \et \B\n  \psi    \jt \n\B\n( \vp  \jt  \psi  ) \et \n( \vp  \jt  \psi  )$           & $\forall_{x,y,z} \,((xRx \vee (xRy \jt x = y)) \land (xRz \jt (x = z \vee  xRz)))$ \\ \hline
$\mb{I}_{{\mt{t,R}}}^{\mt{R}}$ & $\n\B \vp  \et  \vp  \et \B\n  \psi    \jt \B\n( \vp  \jt  \psi  ) \et \n( \vp  \jt  \psi  )$             & $\forall_{x,y,z} \,((xRx \vee  (xRy \jt x = y)) \land  (xRz \jt x = z))$ \\ \hline
$\mb{I}_{{\mt{f,P}}}^{\mt{P}}$ & $\n\B\n \vp  \et \n \vp  \et \B  \psi    \jt \B( \vp  \jt  \psi  ) $                                & $\forall_{x,y,z} \,((xRy \land  xRz) \jt (x = y  \vee  x = z  \vee  xRz))$  \\ \hline
$\mb{I}_{{\mt{f,f}}}^{\mt{P}}$ & $\n\B\n \vp  \et \n \vp  \et \n\B\n  \psi   \et \n \psi   \jt \B( \vp  \jt  \psi  ) $                   & $\forall_{x,y_1,y_2,z}\,(xRy_1 \land xRy_2 \land  xRz \jt (x = y_1 \vee  x = y_2 \vee  x =z  \vee  y_2 = z)$ \\ \hline
$\mb{I}_{{\mt{f,t}}}^{\mt{P}}$ & $\n\B\n \vp  \et \n \vp  \et \n\B  \psi   \et  \psi   \jt \B( \vp  \jt  \psi  ) $                       & $\forall_{x,y_1,y_2,z}\,(xRy_1 \land  xRy_2 \jt (x = y_1 \vee  x = y_2 \vee (xRz \jt x = z)))$\\ \hline
$\mb{I}_{{\mt{t,t}}}^{\mt{P}}$ & $\n\B \vp  \et  \vp  \et \n\B\  \psi   \et  \psi   \jt \B( \vp  \jt  \psi  ) $                          & $\forall_{x,y_1,y_2,z}\,(xRy_1 \jt (x = y_1 \vee (xRy_2 \jt x =y_2 ) \vee (xRz \jt (x = z \vee  y_1 = z))))$ \\ \hline

\end{tabular}
\end{center}
\caption{Summary the Kripke conditions}
\end{table}

\noindent However, for four axioms we were unable to find the corresponding conditions:

\begin{table}[h!]
\begin{center}
\begin{tabular}{|l|l|l|}
\hline Name                            & Axiom                                                                         &Condition                              
\\ \hline

$\mb{I}_{{\mt{P,R}}}^{\mt{f}}$ & $\B \vp   \et \B\n  \psi    \jt \n\B\n( \vp  \jt  \psi  ) \et \n( \vp  \jt  \psi  )$                    & Unknown                                                           \\ \hline
$\mb{I}_{{\mt{R,P}}}^{\mt{t}}$ & $\B\n \vp  \et \B  \psi    \jt \n\B( \vp  \jt  \psi  ) \et ( \vp  \jt  \psi  )$                         & Unknown                                            \\ \hline
$\mb{I}_{{\mt{P,P}}}^{\mt{t}}$ & $\B \vp  \et \B  \psi  \jt \n\B( \vp  \jt  \psi  ) \et ( \vp  \jt  \psi  )$                             & Unknown                                  \\ \hline
$\mb{I}_{{\mt{R,R}}}^{\mt{t}}$ & $\B\n \vp  \et \B\n  \psi    \jt \n\B( \vp  \jt  \psi  ) \et ( \vp  \jt  \psi  )$                       & Unknown                                              \\ \hline
\end{tabular}
\end{center}
\caption{Kripke harder conditions}
\end{table}

\section{Regaining \textbf{T-BAT}}
In this section we provide an axiomatization of \textbf{T-BAT} different than the one of $\mathbf{S4}^-$ from \cite{Omori2016}.\footnote{Note, that if the original formulation of \textbf{T-BAT} in \cite{Pawlowski2022} is without the typo in the truth-table of implication that it is still quite easy to get the axiomatization of the original presentation by using our framework.}\footnote{Systems quite similar to \textbf{T-BAT} have been already studied in \cite{Coniglio2015,Coniglio2016,Ivlev1988,Coniglio2019}.}

Axiomatically \textbf{T-BAT} is characterized as the logic \textbf{W} plus the following axioms:
\begin{multicols}{2}
\begin{description}[itemsep=0pt, leftmargin=2em, style=sameline, font={\mdseries\rmfamily}]
      \item[\textsf{N1}] $\B\vp \jt \B\n\n\vp$.
      \item[\textsf{N4}] $\n\B\vp\et \vp \jt \n\B\n\n\vp$.
      \item[\textsf{B1}] $\B\vp \jt \B\B\vp$.
      \item[\textsf{B7}] $\B\n\vp \jt  \B\n\B\vp$.

      \item[$\mb{I}_{{\mt{P,P}}}^{\mt{P}}$] $\B \vp  \et \B  \psi   \jt \B( \vp  \jt  \psi  )$ 
      
\item[$\mb{I}_{{\mt{t,P}}}^{\mt{P}}$] $\n\B \vp \et \vp \et \B \psi \jt \B( \vp \jt \psi )$
\item[$\mb{I}_{{\mt{f,P}}}^{\mt{P}}$] $\n\B\n \vp \et \n \vp \et \B \psi \jt \B( \vp \jt \psi )$
\item[$\mb{I}_{{\mt{R,R}}}^{\mt{P}}$] $\B\n \vp \et \B\n \psi \jt \B( \vp \jt \psi )$
\item[$\mb{I}_{{\mt{R,t}}}^{\mt{P}}$] $\B\n \vp \et \n\B \psi \et \psi \jt \B( \vp \jt \psi )$
\item[$\mb{I}_{{\mt{R,P}}}^{\mt{P}}$] $\B\n \vp \et \B \psi \jt \B( \vp \jt \psi )$
\item[$\mb{I}_{{\mt{R,f}}}^{\mt{P}}$] $\B\n \vp \et \n\B\n \psi \et \n \psi \jt \B( \vp \jt \psi )$

\item[$\mb{I}_{{\mt{P,t}}}^{\mt{t}}$] $\B \vp \et \n\B \psi \et \psi \jt \n\B( \vp \jt \psi ) \et ( \vp \jt \psi )$
\item[$\mb{I}_{{\mt{f,R}}}^{\mt{t}}$] $\n\B\n \vp \et \n \vp \et \B\n \psi \jt \n\B( \vp \jt \psi ) \et ( \vp \jt \psi )$

\item[$\mb{I}_{{\mt{P,f}}}^{\mt{f}}$] $\B \vp \et \n\B\n \psi \et \n \psi \jt \n\B\n( \vp \jt \psi ) \et \n( \vp \jt \psi )$
\item[$\mb{I}_{{\mt{t,f}}}^{\mt{f}}$] $\n\B \vp \et \vp \et \n\B\n \psi \et \n \psi \jt \n\B\n( \vp \jt \psi ) \et \n( \vp \jt \psi )$
\item[$\mb{I}_{{\mt{t,R}}}^{\mt{f}}$] $\n\B \vp \et \vp \et \B\n \psi \jt \n\B\n( \vp \jt \psi ) \et \n( \vp \jt \psi )$

\item[$\mb{I}_{{\mt{P,R}}}^{\mt{R}}$] $\B \vp \et \B\n \psi \jt \B\n( \vp \jt \psi )$

\end{description}
\end{multicols}

We are going to use $\Gamma \vdash_{\textbf{T-BAT}} \vp$ to denote \textbf{T-BAT} provability.

\begin{thm}[Soundness]
For any $\Gamma,\vp$ we have  if $\Gamma \vdash_{\emph{\textbf{T-BAT}}} \vp$, then $\Gamma \vDash_{\emph{\textbf{T-BAT}}} \vp$ 
\end{thm}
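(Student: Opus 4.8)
The plan is to prove soundness by the standard route of checking that every axiom is a tautology and that every rule preserves designated values, then closing under the consequence relation. Since \textbf{T-BAT} is the logic \textbf{W} augmented with the finitely many listed axioms, and soundness of \textbf{W} was already established (Theorem~1), it suffices to verify that each of the additional axioms (\textsf{N1}, \textsf{N4}, \textsf{B1}, \textsf{B7}, and the fifteen implication-axioms $\mb{I}^{\bullet}_{\bullet,\bullet}$) evaluates to a designated value under every \textbf{T-BAT} valuation $v$, and that \textsf{MP} preserves designation under the \textbf{T-BAT} Nmatrix $\M_{\mb{T}}$. The soundness statement then follows by induction on the length of a derivation: the base cases are members of $\Gamma$ (designated by hypothesis) and axiom instances (designated by the verification), and the inductive step handles \textsf{MP}.

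First I would observe that preservation of designation under \textsf{MP} is immediate from the implication truth-table of $\M_{\mb{T}}$: inspecting the first two rows (antecedent in $\{\mb{P},\mb{t}\}=\mt{D}_{\mb{T}}$), whenever $v(\vp)\in\mt{D}_{\mb{T}}$ and $v(\vp\jt\psi)\in\mt{D}_{\mb{T}}$, the table forces $v(\psi)\in\mt{D}_{\mb{T}}$, since the only non-designated outputs in those rows occur precisely when $v(\psi)\in\{\mb{f},\mb{R}\}$. Next I would verify the axioms. Here the crucial point, and the reason this is not merely a restatement of Theorem~1, is methodological: each added axiom was constructed by translating a single strengthened cell of the Nmatrix into the object language via the (simplified) valuation lemma. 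So for each axiom I would check that under any assignment making its antecedent designated, the value pattern of $\vp$ (resp.\ $\psi$) is exactly the one named in the corresponding row of Tables~5--7, and that the consequent then correctly reads off the stipulated value of the compound. Concretely, for an axiom such as $\mb{I}^{\mt{P}}_{\mt{P,P}}$ one checks that $v(\B\vp)\in\mt{D}_{\mb{T}}$ and $v(\B\psi)\in\mt{D}_{\mb{T}}$ force $v(\vp)=v(\psi)=\mb{P}$ (by the $\ov{\B}$-table), whence $v(\vp\jt\psi)=\mb{P}$ and $v(\B(\vp\jt\psi))=\mb{P}\in\mt{D}_{\mb{T}}$.

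The bulk of the work is a finite but tedious case analysis over the at most $4\times 4=16$ value-combinations for $\vp,\psi$, pruned drastically by the antecedent. The key simplification I would exploit is that in $\M_{\mb{T}}$ the operator $\ov{\B}$ is almost deterministic: $\ov{\B}(\mb{P})=\mb{P}$ and $\ov{\B}(\mb{R})=\mb{R}$ are single-valued, while the non-determinism $\ov{\B}(\mb{t})=\ov{\B}(\mb{f})=\{\mb{f},\mb{R}\}$ only ever produces non-designated values. Consequently an antecedent conjunct like $\B\vp$ pins $v(\vp)=\mb{P}$, a conjunct $\B\n\vp$ pins $v(\vp)=\mb{R}$, and a conjunct $\n\B\vp\et\vp$ pins $v(\vp)=\mb{t}$ (truth plus failure of provability), with the symmetric reading for $\n\B\n\vp\et\n\vp$ yielding $v(\vp)=\mb{f}$. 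Thus each antecedent determines $v(\vp)$ and $v(\psi)$ uniquely, and one only has to confirm that the single resulting $\ov{\jt}$-cell lands in the value whose object-language description is the consequent. The single genuine obstacle is bookkeeping: the implication table has the proper non-deterministic cells $\ov{\jt}(\mb{t},\mb{t})=\ov{\jt}(\mb{f},\mb{t})=\ov{\jt}(\mb{f},\mb{f})=\{\mb{P},\mb{t}\}$, so for the axioms touching those cells (e.g.\ $\mb{I}^{\mt{t}}_{\mt{P,t}}$, $\mb{I}^{\mt{f}}_{\mt{P,f}}$) I must confirm the consequent is satisfied by \emph{both} admissible output values, not just one; for instance the consequent $\n\B(\vp\jt\psi)\et(\vp\jt\psi)$ must be checked to hold when $v(\vp\jt\psi)$ is either $\mb{P}$ or $\mb{t}$, which it does precisely because the conjunct $(\vp\jt\psi)$ only demands designation. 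Once these non-deterministic cells are handled carefully, every remaining axiom reduces to reading one deterministic entry, and the induction closes.
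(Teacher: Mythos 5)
Your overall route---induction on the length of a derivation, \textsf{MP} preserving designation, and a cell-by-cell check that each added axiom is designated under every \textbf{T-BAT} valuation, using the fact that $\ov{\B}$ and $\ov{\n}$ pin the values named in the antecedents---is the standard argument that the paper compresses into a one-line ``induction,'' and most of it is correct. But your paragraph on the non-deterministic cells contains a step that fails as written. You claim that for axioms ``touching'' the cells $\ov{\jt}(\mb{t},\mb{t})=\ov{\jt}(\mb{f},\mb{t})=\ov{\jt}(\mb{f},\mb{f})=\{\mb{P},\mb{t}\}$, for instance $\mb{I}_{{\mt{P,t}}}^{\mt{t}}$, the consequent $\n\B(\vp\jt\psi)\et(\vp\jt\psi)$ holds for both values $\mb{P}$ and $\mb{t}$ of $v(\vp\jt\psi)$, ``because the conjunct $(\vp\jt\psi)$ only demands designation.'' That is false: if $v(\vp\jt\psi)=\mb{P}$, then $\ov{\B}$ is deterministic there, so $v(\B(\vp\jt\psi))=\mb{P}$, hence $v(\n\B(\vp\jt\psi))=\mb{R}\notin \mt{D}_{\mb{T}}$ and the consequent is \emph{not} designated. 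A conjunct of the form $\n\B(\vp\jt\psi)\et(\vp\jt\psi)$ demands strictly more than designation of $\vp\jt\psi$: it demands the exact value $\mb{t}$, so an axiom with such a consequent could never be sound against a cell whose output set is $\{\mb{P},\mb{t}\}$.

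The reason the axioms are nonetheless all valid---and the repair your plan needs---is that none of the listed \textbf{T-BAT} axioms corresponds to a non-deterministic implication cell. Your own examples do not touch them: the antecedent of $\mb{I}_{{\mt{P,t}}}^{\mt{t}}$ pins $(v(\vp),v(\psi))=(\mb{P},\mb{t})$ and $\ov{\jt}(\mb{P},\mb{t})=\{\mb{t}\}$ is deterministic, just as $\mb{I}_{{\mt{P,f}}}^{\mt{f}}$ uses the deterministic cell $\ov{\jt}(\mb{P},\mb{f})=\{\mb{f}\}$. The three genuinely non-deterministic cells $(\mb{t},\mb{t}),(\mb{f},\mb{t}),(\mb{f},\mb{f})$ output exactly $\{\mb{P},\mb{t}\}$, which is already what the base logic \textbf{W} permits, so no axiom is attached to them; the \textbf{T-BAT} axioms exist precisely for the cells that were strengthened to a single value. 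Where non-determinism genuinely enters the verification is in the \emph{consequents}, via $\ov{\B}(\mb{t})=\ov{\B}(\mb{f})=\{\mb{f},\mb{R}\}$ (e.g.\ in \textsf{N4}, $\mb{I}_{{\mt{P,t}}}^{\mt{t}}$, $\mb{I}_{{\mt{t,f}}}^{\mt{f}}$): there one checks that \emph{both} admissible values of $\B(\cdot)$ are non-designated, so $\n\B(\cdot)$ comes out designated either way. With the case analysis redirected in this way (and the routine check, which you leave implicit, that the defined $\et$ is designated exactly when both conjuncts are), your induction closes and agrees with the paper's proof.
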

\begin{proof}
Induction on the complexity of $\vp$.
\end{proof}

\begin{thm}[Completeness]
For any $\Gamma,\vp$ we have  if $\Gamma \vDash_{\emph{\textbf{T-BAT}}} \vp$, then $\Gamma \vdash_{\emph{\textbf{T-BAT}}} \vp$ 
\end{thm}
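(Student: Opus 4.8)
The plan is to reproduce the completeness argument for $\mathbf{W}$ at the outer level and to relocate all the genuine work into a strengthened Valuation Lemma. Arguing by contraposition, suppose $\Gamma \not\vdash_{\textbf{T-BAT}} \vp$. The maximal-consistent extension fact (valid for every Tarskian finitary logic, in particular for $\textbf{T-BAT}$) yields a $\textbf{T-BAT}$-$\mt{mxc}$ set $\Delta \supseteq \Gamma$ with $\Delta \not\vdash_{\textbf{T-BAT}} \vp$. I define $\Theta_\Delta$ by the very same four clauses as in the Valuation Lemma, replacing $\vdash_{\mathbf{W}}$ by $\vdash_{\textbf{T-BAT}}$ throughout. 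Once $\Theta_\Delta$ is shown to be a bona fide $\textbf{T-BAT}$-valuation we are done: since an $\mt{mxc}$ set decides every formula, $\Theta_\Delta(\psi)\in\mt{D}$ for each $\psi\in\Delta\supseteq\Gamma$ whereas $\Theta_\Delta(\vp)\in\ov{\mt{D}}$, exhibiting a valuation that witnesses $\Gamma \not\vDash_{\textbf{T-BAT}} \vp$.

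The well-definedness of $\Theta_\Delta$ as a total, single-valued map is inherited without change: $\Delta$ decides $\vp$ and then decides the relevant boxed literal, so exactly one of the four clauses fires. Likewise, the atomic case and every clause of the induction that only claims membership in $\mt{D}$ or $\ov{\mt{D}}$ is word-for-word the $\mathbf{W}$ computation, because $\textbf{T-BAT}$ extends $\mathbf{W}$ and so retains full classical propositional logic together with axiom $\textsf{T}$. What is left is to sharpen each such clause to the more deterministic $\textbf{T-BAT}$ entry, cell by cell, and every sharpening is driven by exactly one of the new axioms.

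The recipe is uniform: read the value prescribed by the $\textbf{T-BAT}$ table off the revised valuation clauses, then manufacture the two derivabilities it requires from the axiom whose antecedent encodes the input value(s). For $\n$: if $\Theta_\Delta(\vp)=\mt{P}$ then $\Delta\vdash\B\vp$, so $\textsf{N1}$ gives $\Delta\vdash\B\n\n\vp$ and hence $\Theta_\Delta(\n\vp)=\mt{R}$; the input $\mt{t}$ is handled by $\textsf{N4}$, while the inputs $\mt{f}\mapsto\mt{t}$ and $\mt{R}\mapsto\mt{P}$ hold already by the definition of $\Theta_\Delta$ (this is exactly the observation of the simplified-semantics remark). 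For $\B$: input $\mt{P}$ forces $\Theta_\Delta(\B\vp)=\mt{P}$ via $\textsf{B1}$, input $\mt{R}$ forces $\Theta_\Delta(\B\vp)=\mt{R}$ via $\textsf{B7}$ and $\textsf{T}$, and inputs $\mt{t},\mt{f}$ only demand $\Theta_\Delta(\B\vp)\in\{\mt{f},\mt{R}\}=\ov{\mt{D}}$, which is the $\mathbf{W}$ clause. For $\jt$, each of the thirteen cells whose $\textbf{T-BAT}$ output is a single value is settled by the identically indexed axiom; for instance, input $(\mt{P},\mt{t})$ together with $\mb{I}_{\mt{P,t}}^{\mt{t}}$ yields $\Delta\vdash\n\B(\vp\jt\psi)\et(\vp\jt\psi)$, so $\Theta_\Delta(\vp\jt\psi)=\mt{t}$. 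The three remaining cells, $(\mt{t},\mt{t})$, $(\mt{f},\mt{t})$ and $(\mt{f},\mt{f})$, have output the whole block $\{\mt{P},\mt{t}\}=\mt{D}$ and so need nothing beyond the $\mathbf{W}$ argument.

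The only real difficulty is bookkeeping. One must check the sixteen implication entries against the table, confirm that the thirteen single-valued outputs are matched bijectively with the thirteen $\mb{I}$-axioms while the other three outputs are precisely $\mt{D}$, and in each instance translate the axiom's consequent into the exact pair of derivabilities demanded by the relevant clause of $\Theta_\Delta$, using the $\mt{mxc}$-completeness of $\Delta$ to pass freely between $\Delta\vdash\chi$ and $\Delta\not\vdash\n\chi$. No individual cell is hard, but the argument stands or falls on this precise alignment between the supplementary axioms and the strengthened cells; once it is verified, the Valuation Lemma holds for $\textbf{T-BAT}$ and completeness follows exactly as for $\mathbf{W}$.
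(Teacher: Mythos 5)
Your proposal is correct and follows essentially the same route as the paper: contraposition, extension to a \textbf{T-BAT}-$\mathtt{mxc}$ set, and a strengthened Valuation Lemma in which each newly deterministic cell of the \textbf{T-BAT} tables is verified by the correspondingly indexed axiom ($\mathsf{N1}$, $\mathsf{N4}$, $\mathsf{B1}$, $\mathsf{B7}$, and the thirteen $\mb{I}$-axioms), with the remaining cells inherited from the $\mathbf{W}$ argument. Your explicit bookkeeping --- the bijection between the thirteen single-valued implication cells and the thirteen $\mb{I}$-axioms, and the observation that the three $\{\mt{P},\mt{t}\}$ cells need nothing new --- is in fact slightly more complete than the paper's proof, which checks only a representative subset of cases and delegates the rest to ``the standard strategy.''
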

\begin{proof}
To prove this, we need to modify the proof of the valuation lemma. More concretely, we need to check the new cases that correspond to changes in the interpretations of connectives.

\begin{enumerate}

\item{\textbf{Negation}}
\begin{enumerate}

\item Assume $\Theta_\Gamma(\vp)= \mt{P}$, we need to show $\Theta_\Gamma(\n\vp)= \mt{R}$. So, from $\Gamma \vDash_{\mathbf{T-BAT}} \B\vp$ we need to  show $\Gamma \vDash_{\mathbf{T-BAT}} \B\n\n\vp$. This is indeed the case in the light of \textsf{N1}.
\item Assume $\Theta_\Gamma(\vp)= \mt{t}$, we need to show $\Theta_\Gamma(\n\vp)= \mt{f}$.  So, from $\Gamma \vDash_{\mathbf{T-BAT}} \n\B\vp,\Gamma \vDash_{\mathbf{T-BAT}} \vp$ we have to show that $\Gamma \vDash_{\mathbf{T-BAT}} \n\B\n\n\vp,\Gamma \vDash_{\mathbf{T-BAT}} \n\n\vp$ which is the case by \textsf{N4}.
\end{enumerate}

\item{\textbf{Modality}}
\begin{enumerate}

\item Assume $\Theta_\Gamma(\vp)= \mt{P}$, we need to show $\Theta_\Gamma(\B\vp)= \mt{P}$.  So, from $\Gamma \vDash_{\mathbf{T-BAT}} \B\vp$ we have to show $\Gamma \vDash_{\mathbf{T-BAT}} \B\B\vp$ which is indeed the case by the axiom \textsf{B1}.
\item Assume $\Theta_\Gamma(\vp)= \mt{R}$, we need to show $\Theta_\Gamma(\B\vp)= \mt{R}$.  So, from $\Gamma \vDash_{\mathbf{T-BAT}} \B\n\vp$ we have to show $\Gamma \vDash_{\mathbf{T-BAT}} \B\n\B\vp$. This follows by the validity of \textsf{B7}.
\item Assume  $\Theta_\Gamma(\psi)= \mt{P}$, we need to show $\Theta_\Gamma(\vp \jt \psi)= \mt{P}$. So, from $\Gamma \vDash_{\mathbf{T-BAT}} \B\psi$, we need to show $\Gamma \vDash_{\mathbf{T-BAT}} \B(\vp\jt \psi )$. There are four sub-cases to consider here depending on the value of $\vp$ but in all of them we get $\Gamma \vDash_{\mathbf{T-BAT}} \B(\vp\jt \psi )$ by $\mb{I}_{{\mt{P,P}}}^{\mt{P}}, \mb{I}_{{\mt{P,t}}}^{\mt{P}},\mb{I}_{{\mt{P,f}}}^{\mt{P}},\mb{I}_{{\mt{P,R}}}^{\mt{P}}$.
\item Assume  $\Theta_\Gamma(\vp)= \mt{R}$, we need to show $\Theta_\Gamma(\vp \jt \psi)= \mt{P}$. So, from $\Gamma \vDash_{\mathbf{T-BAT}} \B\n\vp$ we have to show $\Gamma \vDash_{\mathbf{T-BAT}} \B(\vp\jt \psi )$. 

\end{enumerate}

\item{\textbf{Conditional}}
\begin{enumerate}

\item Assume  $\Theta_\Gamma(\vp)= \mt{t},\Theta_\Gamma(\psi)= \mt{f}$, we need to show $\Theta_\Gamma(\vp \jt \psi)= \mt{f}$ So, from $\Gamma \vDash_{\mathbf{T-BAT}} \n\B\vp, \Gamma \vDash_{\mathbf{T-BAT}} \vp,\Gamma \vDash_{\mathbf{T-BAT}} \n\B\n\psi,\Gamma \vDash_{\mathbf{T-BAT}} \n\psi$ we need to show $\Gamma \vDash_{\mathbf{T-BAT}} \n\B\n(\vp \jt \psi),\Gamma \vDash_{\mathbf{T-BAT}} \n(\vp \jt \psi)$, which we do get by $\mb{I}_{{\mt{t,f}}}^{\mt{f}}$.

\item Assume  $\Theta_\Gamma(\vp)= \mt{f},\Theta_\Gamma(\psi)= \mt{R}$, we need to show $\Theta_\Gamma(\vp \jt \psi)= \mt{t}$ So, from $\Gamma \vDash_{\mathbf{T-BAT}} \n\B\n \vp, \Gamma \vDash_{\mathbf{T-BAT}} \vp, \Gamma \vDash_{\mathbf{T-BAT}} \B\n\psi$ we need to show $\Gamma \vDash_{\mathbf{T-BAT}} (\vp \jt \psi),\Gamma \vDash_{\mathbf{T-BAT}} \n\B(\vp \jt \psi)$, which we do get by $\mb{I}_{{\mt{f,R}}}^{\mt{t}}$.

\item Assume  $\Theta_\Gamma(\vp)= \mt{t},\Theta_\Gamma(\psi)= \mt{R}$, we need to show $\Theta_\Gamma(\vp \jt \psi)= \mt{f}$ So, from $\Gamma \vDash_{\mathbf{T-BAT}} \n\B\vp, \Gamma \vDash_{\mathbf{T-BAT}} \vp,\Gamma \vDash_{\mathbf{T-BAT}} \B\n\psi$ we need to show $\Gamma \vDash_{\mathbf{T-BAT}} \n\B\n(\vp \jt \psi),\Gamma \vDash_{\mathbf{T-BAT}} \n(\vp \jt \psi)$, which we do get by $\mb{I}_{{\mt{t,R}}}^{\mt{f}}$.
\item Assume  $\Theta_\Gamma(\vp)= \mt{P},\Theta_\Gamma(\psi)= \mt{t}$, we need to show $\Theta_\Gamma(\vp \jt \psi)= \mt{t}$ So, from $\Gamma \vDash_{\mathbf{T-BAT}} \B\vp,\Gamma \vDash_{\mathbf{T-BAT}} \n\B\psi,\Gamma \vDash_{\mathbf{T-BAT}} \psi$ we need to show $\Gamma \vDash_{\mathbf{T-BAT}} \n\B(\vp \jt \psi),\Gamma \vDash_{\mathbf{T-BAT}} (\vp \jt \psi)$. This we get by the validity  of $\mb{I}_{{\mt{P,t}}}^{\mt{t}}$.

\item Assume  $\Theta_\Gamma(\vp)= \mt{P},\Theta_\Gamma(\psi)= \mt{f}$, we need to show $\Theta_\Gamma(\vp \jt \psi)= \mt{f}$ So, from $\Gamma \vDash_{\mathbf{T-BAT}} \B\vp, \Gamma \vDash_{\mathbf{T-BAT}} \n\B\n\psi, \Gamma \vDash_{\mathbf{T-BAT}} \n\psi$ we need to show that $\Gamma \vDash_{\mathbf{T-BAT}} \n\B\n(\vp \jt \psi),\Gamma \vDash_{\mathbf{T-BAT}} \n(\vp \jt \psi)$, which we do get by $\mb{I}_{{\mt{P,f}}}^{\mt{f}}$.

\item Assume  $\Theta_\Gamma(\vp)= \mt{P},\Theta_\Gamma(\psi)= \mt{R}$, we need to show $\Theta_\Gamma(\vp \jt \psi)= \mt{R}$ So, from $\Gamma \vDash_{\mathbf{T-BAT}} \B\vp, \Gamma \vDash_{\mathbf{T-BAT}} \B\n\psi$ we need to get to $\Gamma \vDash_{\mathbf{T-BAT}} \B\n(\vp\jt \psi )$ which we do by $\mb{I}_{{\mt{P,R}}}^{\mt{R}}$.
\end{enumerate}
\end{enumerate}

The rest of the proof follows the standard strategy.
\end{proof}

Since this axiomatization is not the nicest one we are going to simplify it a bit by observing the following:
\begin{fct}
Axiom $\mb{I}_{{\mt{P,t}}}^{\mt{t}}$ and axiom \textsf{K} are equivalent over \textbf{W}, i.e.
$\vDash_{\mathsf{W}} \mb{I}_{{\mt{P,t}}}^{\mt{t}} \jt (\B(\vp \jt \psi) \jt (\B\vp \jt \B\psi)) $ and $\vDash_{\mathsf{W}}   (\B(\vp \jt \psi) \jt (\B\vp \jt \B\psi))\jt \mb{I}_{{\mt{P,t}}}^{\mt{t}} $ 
\end{fct}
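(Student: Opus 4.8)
The plan is to argue semantically, leaning on the soundness and completeness of $\mathbf{W}$: I will show that the two object-language implications are $\mathbf{W}$-tautologies by proving the stronger fact that $\mb{I}_{{\mt{P,t}}}^{\mt{t}}$ and $\B(\vp\jt\psi)\jt(\B\vp\jt\B\psi)$ receive a designated value under \emph{exactly} the same $\mathbf{W}$-valuations. Once that ``the two formulas fail together'' claim is in hand, both implications are valid at once, since, reading off the $\mathbf{W}$-table for $\jt$, an implication $C\jt D$ is non-designated precisely when $C$ is designated and $D$ is non-designated.

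First I would record the behavior of the $\mathbf{W}$-tables at the level of the partition $\{\mb{D},\ov{\mb{D}}\}$. The connectives $\n,\et,\jt$ respect this partition exactly as in classical logic, whereas $\B\chi$ is designated iff $v(\chi)=\mb{P}$. The key structural observation is that neither formula contains a nested occurrence of $\B$: the only modal subformulas are $\B\vp$, $\B\psi$ and $\B(\vp\jt\psi)$, and none sits under another $\B$ or matters beyond its designated status. Hence the designated status of either whole formula is a Boolean combination of the designated statuses $\alpha_\vp,\alpha_\psi,\alpha_{\vp\jt\psi}$ of $\vp,\psi,\vp\jt\psi$ together with the three ``is-$\mb{P}$'' bits $\beta_\vp,\beta_\psi,\beta_{\vp\jt\psi}$ governing the three boxes; the finer $\mb{P}/\mb{t}$ and $\mb{f}/\mb{R}$ choices made for the boxes themselves are never seen again and so are irrelevant.

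Next I collect the constraints linking these bits: $\beta_\chi$ implies $\alpha_\chi$; $\alpha_{\vp\jt\psi}=\neg(\alpha_\vp\et\neg\alpha_\psi)$; and, most importantly, $\beta_{\vp\jt\psi}$ requires $v(\vp\jt\psi)=\mb{P}$, which is an available output only when $\ov{\jt}(v(\vp),v(\psi))=\mb{D}$, so that when additionally $\beta_\vp$ holds (i.e.\ $v(\vp)=\mb{P}$) the first row of the $\jt$-table forces $v(\psi)$ to be designated, i.e.\ $\alpha_\psi$. A short computation then shows that $\B(\vp\jt\psi)\jt(\B\vp\jt\B\psi)$ is non-designated exactly when $\beta_{\vp\jt\psi}\et\beta_\vp\et\neg\beta_\psi$, while $\mb{I}_{{\mt{P,t}}}^{\mt{t}}$ is non-designated exactly when $\beta_\vp\et\neg\beta_\psi\et\alpha_\psi\et\beta_{\vp\jt\psi}$ (the would-be extra disjunct $\neg\alpha_{\vp\jt\psi}$ coming from the consequent's failure is killed by the presence of $\beta_\vp$ and $\alpha_\psi$, which force $\alpha_{\vp\jt\psi}$).

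Finally I compare the two failure conditions. One inclusion is immediate, since the failure of $\mb{I}_{{\mt{P,t}}}^{\mt{t}}$ contains all the conjuncts of the failure of \textsf{K}. The converse is exactly the interaction isolated above: from $\beta_{\vp\jt\psi}\et\beta_\vp$ the first row of the $\jt$-table forces $\alpha_\psi$, so the failure of \textsf{K} already entails the extra conjunct $\alpha_\psi$ and hence the failure of $\mb{I}_{{\mt{P,t}}}^{\mt{t}}$. Thus the two formulas are designated under precisely the same $\mathbf{W}$-valuations, whence both target implications are $\mathbf{W}$-valid. I expect the only genuine obstacle to be bookkeeping the non-determinism correctly — in particular, keeping straight that $\beta_{\vp\jt\psi}$ is a realizable possibility rather than something forced, and recognizing that it is exactly its interaction with the first row of the $\jt$-table that renders the otherwise-extra conjunct $\alpha_\psi$ redundant.
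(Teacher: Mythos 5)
Your proposal is correct and is in substance the same argument as the paper's: the paper likewise proceeds semantically by showing that \textsf{K} and $\mb{I}_{{\mt{P,t}}}^{\mt{t}}$ are non-designated under exactly the same $\mathbf{W}$-valuations, with the identical pivotal step that $v(\B(\vp\jt\psi))\in\mt{D}$ forces $v(\vp\jt\psi)=\mt{P}$, which together with $v(\vp)=\mt{P}$ and the first row of the $\jt$-table forces $v(\psi)$ designated (hence $v(\psi)=\mt{t}$ given $v(\B\psi)\in\ov{\mt{D}}$). Your $\alpha/\beta$-bit bookkeeping is merely a systematized presentation of the paper's direct valuation-chasing in the two directions, and your opening appeal to soundness and completeness is unnecessary since, as in the paper, the claim is purely semantic.
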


\begin{proof}
Suppose there is a valuation $v$ such that  $v((\B(\vp \jt \psi) \jt (\B\vp \jt \B\psi))\in \ov{\mt{D}}$. This implies that $v(\B(\vp \jt \psi))=\mt{P}$ and so  $v(\vp \jt \psi)=\mt{P}$. Additionally, we have $v(\B\vp)=\mt{P}$, thus $v(\vp)=\mt{P}$, and $v(\B\psi)\in \ov{\mt{D}}$, so $v(\n\B\psi)\in \mt{D}$. Since $v(\vp \jt \psi)\in \mt{D}$ and $v(\vp)=\mt{P}$ we know that $v(\psi)\in \mt{D}$. From the fact that $v(\B\psi)\in \ov{\mt{D}}$ we have $v(\psi)=\mt{t}$. This means that $v(\B\vp \et \n\B\psi \et \psi),v(\vp\jt\psi), \in \mt{D}$ and $v(\n\B(\vp \jt \psi))\in \ov{\mt{D}}$. This lead us to $v(\mb{I}_{{\mt{P,t}}}^{\mt{t}})\in \ov{\mt{D}}$.

For the other direction, assume that there is a valuation $v(\B \vp   \et \n\B  \psi  \et \psi  \jt \n\B( \vp  \jt  \psi ) \et (\vp  \jt  \psi ))\in \ov{\mt{D}}$. So, $v(\B\vp)=v(\vp)=\mt{P},v(\n\B\psi)\in \mt{D},v(\psi)\in \mt{D}$ and either $v(\n\B(\vp \jt \psi))\in \ov{\mt{D}}$ or $v(\vp\jt\psi) \in \ov{\mt{D}}$. The latter is impossible, since $v(\psi)\in \mt{D}$ so we get $v(\n\B(\vp \jt \psi))\in\ov{\mt{D}}$. This implies $v(\B(\vp\jt\psi))=\mt{P}$. Put everything together we have $v(\B(\vp \jt \psi))=\mt{P}=v(\B\vp),v(\B\psi)\in \ov{\mt{D}}$, so $v(\B(\vp \jt \psi) \jt (\B\vp \jt \B\psi))\in \ov{\mt{D}}$. 
\end{proof}

\section{Conclusions}

In this paper, we provide a proper axiomatization of \textbf{T-BAT} consequence relation. Additionally, we have studied all the extensions of this system, noting that not all of them are sublogics of \textbf{S4} or even of \textbf{S5}. One of the open philosophical problems is to provide a more substantial characterization of the notion of informal provability. Currently, this notion does not seem to be fully operational, and it is difficult to judge the correctness of various inference patterns that would allows us to determine whether any of the particular extensions of \textbf{T-BAT} could be potentially use for informal provability. However, if such a case could be made, it would imply that the proper logic of informal probability may not be a sublogic of \textbf{S5} and this in turn would lead to intriguing philosophical consequences.\footnote{This reserach has been supported by the FWO senior-postdoctoral grant 1255724N. The language of this paper has been refined by an LLM.}
\bibliographystyle{plain}
\bibliography{/home/haptism/Dropbox/Additional_stuff/Bibliographyfile/haptism}

\begin{thebibliography}{10}

\bibitem{AntonuttiMarfori2010}
M.~{Antonutti Marfori}.
\newblock Informal proofs and mathematical rigour.
\newblock {\em Studia Logica}, 96:261--272, 2010.

\bibitem{AntonuttiMarfori2018}
M.~{Antonutti Marfori} and L.~Horsten.
\newblock {Human-Effective Computability}.
\newblock {\em Philosophia Mathematica}, 27(1):61--87, 06 2018.

\bibitem{Avron2005a}
A.~Avron.
\newblock Logical non-determinism as a tool for logical modularity: An
  introduction.
\newblock In {\em We Will Show Them!(1)}, pages 105--124, 2005.

\bibitem{Avron2005c}
A.~Avron.
\newblock Non-deterministic matrices and modular semantics of rules.
\newblock In {\em Logica universalis}, pages 149--167. Springer, 2005.

\bibitem{Avron2005b}
A.~Avron.
\newblock Non-deterministic semantics for paraconsistent c-systems.
\newblock In {\em European Conference on Symbolic and Quantitative Approaches
  to Reasoning and Uncertainty}, pages 625--637. Springer, 2005.

\bibitem{Avron2009a}
A.~Avron.
\newblock Multi-valued semantics: Why and how.
\newblock {\em Studia Logica}, 92(2):163, 2009.

\bibitem{Avron2007}
A.~Avron, J.~Ben-Naim, and B.~Konikowska.
\newblock Cut-free ordinary sequent calculi for logics having generalized
  finite-valued semantics.
\newblock {\em Logica Universalis}, 1(1):41--70, jan 2007.

\bibitem{Avron2008a}
A.~Avron and B.~Konikowska.
\newblock Rough sets and 3-valued logics.
\newblock {\em Studia Logica}, 90(1):69--92, 2008.

\bibitem{Avron2005e}
A.~Avron and I.~Lev.
\newblock Non-deterministic multiple-valued structures.
\newblock {\em Journal of Logic and Computation}, 15(3):241--261, 2005.

\bibitem{Avron2011}
A.~Avron and A.~Zamansky.
\newblock Non-deterministic semantics for logical systems.
\newblock In Gabbay and Guenthner, editors, {\em Handbook of Philosophical
  Logic}, volume~16, pages 227--304. Springer Netherlands, 2011.

\bibitem{Azzouni2004}
J.~Azzouni.
\newblock The derivation-indicator view of mathematical practice.
\newblock {\em Philosophia Mathematica}, 12(2):81--106, 2004.

\bibitem{Coniglio2016}
M.~E. Coniglio, L.~Fari{\~n}as Del~Cerro, and M.~P. Newton.
\newblock Errata and addenda to ‘finite non-deterministic semantics for some
  modal systems’.
\newblock {\em Journal of Applied Non-Classical Logics}, 26(4):336--345, 2016.

\bibitem{Coniglio2015}
M.~E. Coniglio, L.~Fari{\~n}as Del~Cerro, and N.~M. Peron.
\newblock Finite non-deterministic semantics for some modal systems.
\newblock {\em Journal of Applied Non-Classical Logics}, 25(1):20--45, 2015.

\bibitem{Coniglio2019}
M.~E. Coniglio, L.~Fari{\~n}as Del~Cerro, and N.~M. Peron.
\newblock {Modal logic with non-deterministic semantics: Part
  {I}—Propositional case}.
\newblock {\em Logic Journal of the IGPL}, pages 281--315, 10 2019.

\bibitem{Coniglio2021}
M.~E. Coniglio, L~Fari{\~n}as Del~Cerro, and N.~M. Peron.
\newblock Modal logic with non-deterministic semantics: {P}art
  {II}—{Q}uantified case.
\newblock {\em Logic Journal of the IGPL}, pages 695--727, 06 2021.

\bibitem{Coniglio2024}
M.~E. Coniglio, L.~Fari{\~n}as Del~Cerro, and N.~M. Peron.
\newblock {\em Tableau Systems for Some Ivlev-Like (Quantified) Modal Logics},
  pages 111--149.
\newblock Springer International Publishing, 2024.

\bibitem{Coniglio2019b}
M.~E. Coniglio and A.~C. Golzio.
\newblock Swap structures semantics for ivlev-like modal logics.
\newblock {\em Soft Computing}, 23(7):2243--2254, jan 2019.

\bibitem{Crawford1998}
J.~M. Crawford and D.~W. Etherington.
\newblock A non-deterministic semantics for tractable inference.
\newblock In {\em AAAI/IAAI}, pages 286--291, 1998.

\bibitem{Ferguson2022}
T.~M. Ferguson.
\newblock Non-deterministic many-valued modal logic.
\newblock In {\em 2022 {IEEE} 52nd International Symposium on Multiple-Valued
  Logic ({ISMVL})}. {IEEE}, may 2022.

\bibitem{Graetz2022}
L.~Gr{\"a}tz.
\newblock Truth tables for modal logics t and s4, by using three-valued
  non-deterministic level semantics.
\newblock {\em Journal of Logic and Computation}, 32(1):129--157, 2022.

\bibitem{Horsten1998}
L.~Horsten.
\newblock In defence of {E}pistemic {A}rithmetic.
\newblock {\em Synthese}, 116:1--25, 1998.

\bibitem{Ivlev1988}
Y.~Ivlev.
\newblock A semantics for modal calculi.
\newblock {\em Bulletin of the Section of Logic}, 17(3/4):114--121, 1988.

\bibitem{Kearns1981}
J.~Kearns.
\newblock Modal semantics without possible worlds.
\newblock {\em The Journal of Symbolic Logic}, 46(1):77--86, 1981.

\bibitem{Lahav2022}
O.~Lahav and Y.~Zohar.
\newblock Effective semantics for the modal logics k and kt via
  non-deterministic matrices.
\newblock In Jasmin Blanchette, Laura Kov{\'a}cs, and Dirk Pattinson, editors,
  {\em Automated Reasoning}, volume 13385 of {\em Lecture Notes in Computer
  Science}, pages 468--485. Springer International Publishing, 2022.

\bibitem{Leitgeb2009}
H.~Leitgeb.
\newblock On formal and informal provability.
\newblock In Ot{\'a}vio Bueno and {\O}ystein Linnebo, editors, {\em New Waves
  in Philosophy of Mathematics}, pages 263--299. New York: Palgrave Macmillan,
  2009.

\bibitem{Myhill1960}
J.~Myhill.
\newblock Some remarks on the notion of proof.
\newblock {\em Journal of Philosophy}, 57(14):461--471, 1960.

\bibitem{Omori2016}
H.~Omori and D.~Skurt.
\newblock More modal semantics without possible worlds.
\newblock {\em IFCoLog Journal of Logics and their Applications},
  3(5):815--845, 2016.

\bibitem{Omori2020}
H.~Omori and D.~Skurt.
\newblock A semantics for a failed axiomatization of {K}.
\newblock In N.~Olivietti, R.~Verbrugge, S.~Negri, and G.~Sandu, editors, {\em
  Advances in Modal Logic}, volume~13, pages 481--501. College Publications,
  2020.

\bibitem{Omori2021}
H.~Omori and D.~Skurt.
\newblock Untruth, falsity and non-deterministic semantics.
\newblock In {\em 2021 IEEE 51st International Symposium on Multiple-Valued
  Logic (ISMVL)}, pages 74--80. IEEE, 2021.

\bibitem{Omori2024}
H.~Omori and D.~Skurt.
\newblock On {I}vlev's semantics for modality.
\newblock In M.~E. Coniglio, E.~Kubyshkina, and D.~V. Zaitsev, editors, {\em
  Many-valued semantics and modal logics. Essays in Honour of Yuri V. Ivlev},
  Synthese Library -- Studies in Epistemology, Logic, Methodology, and
  Philosophy of Science. Springer, 2024.
\newblock forthcoming.

\bibitem{Pawlowski2018a}
P.~Pawlowski.
\newblock Proof systems for {B}{A}{T} consequence relations.
\newblock {\em Logic Journal of the IGPL}, 26(1):96--108, 2018.

\bibitem{Pawlowski2021a}
P.~Pawlowski.
\newblock Non-deterministic logic of informal provability has no finite
  characterization.
\newblock {\em Journal of Logic, Language and Information}, 30(4):805--817, oct
  2021.

\bibitem{Pawlowski2023a}
P.~Pawlowski.
\newblock Height of valuations in non-deterministic semantics for modal logic.
\newblock {\em IfCoLog Journal of Logics and their Applications},
  10(4):587--601, 2023.

\bibitem{Pawlowski2022a}
P.~Pawlowski and E.~La~Rosa.
\newblock Modular non-deterministic semantics for {T, TB, S4, S5} and more.
\newblock {\em Journal of Logic and Computation}, 32(1):158--171, 2022.

\bibitem{Pawlowski2024}
P.~Pawlowski and D.~Skurt.
\newblock {$\Box$ and $\Diamond$ in Eight-Valued Non-Deterministic Semantics
  for Modal Logics}.
\newblock {\em Journal of Logic and Computation}, page exae010, 03 2024.

\bibitem{Pawlowski2024g}
P.~Pawlowski and D.~Skurt.
\newblock 8 valued non-deterministic semantics for modal logics.
\newblock {\em Journal of Philosophical logic}, January 2024g.

\bibitem{Pawlowski2018}
P.~Pawlowski and R.~Urbaniak.
\newblock Many-valued logic of informal provability: A non-deterministic
  strategy.
\newblock {\em The Review of Symbolic Logic}, 11(2):207--223, 2018.

\bibitem{Pawlowski2021}
P.~Pawlowski and R.~Urbaniak.
\newblock Informal provability, first-order {BAT} logic and first steps towards
  a formal theory of informal provability.
\newblock {\em Logic and Logical Philosophy}, pages 1--27, nov 2021.

\bibitem{Pawlowski2022}
P.~Pawlowski and R.~Urbaniak.
\newblock Logic of informal provability with truth values.
\newblock {\em Logic Journal of the {IGPL}}, 31(1):172--193, feb 2022.

\bibitem{Pawlowski2023}
P.~Pawlowski and R.~Urbaniak.
\newblock Informal provability and dialetheism.
\newblock {\em Theoria}, 89(2):204--215, feb 2023.

\bibitem{Priest2006}
G.~Priest.
\newblock {\em In Contradiction}.
\newblock Oxford University Press UK, 2006.

\bibitem{Rav1999}
Y.~Rav.
\newblock Why do we prove theorems?
\newblock {\em Philosophia Mathematica}, 7(1):5--41, 1999.

\bibitem{Reichenbach1935}
H.~Reichenbach.
\newblock {\em Wahrscheinlichkeitslehre}.
\newblock A. W. Sijthoff's uitgeversmaatschappij n.v., Leiden, 1935.

\bibitem{Reichenbach1936}
H.~Reichenbach.
\newblock Wahrscheinlichkeitslogik als form wissenschaftlichen denkens.
\newblock In {\em Actes du Congrès International de Philosophie Scientifique,
  IV Induction et probabilité}, volume 391 of {\em Actualités scientifiques
  et industrielles}, pages 24--30. Hermann \& Cie, Paris, 1936.

\bibitem{Rescher1962}
N.~Rescher.
\newblock Quasi-truth-functional systems of propositional logic.
\newblock {\em The Journal of Symbolic Logic}, 27(1):1--10, 1962.

\bibitem{Shapiro1985}
S.~Shapiro.
\newblock Epistemic and {I}ntuitionistic {A}rithemtic.
\newblock In Stewart Shapiro, editor, {\em Intensional mathematics}. North
  Holland, 1985.

\bibitem{Solovay1976}
R.~M. Solovay.
\newblock Provability interpretations of modal logic.
\newblock {\em Israel Journal of Mathematics}, 25(3-4):287--304, 1976.

\bibitem{Zawirski1936}
Z.~Zawirski.
\newblock Les rapports de la logique polyvalente avec le calcul des
  probabilités.
\newblock In {\em Actes du Congrès international de philosophie scientifique,
  IV Induction et probabilité}, volume 391 of {\em Actualités scientifiques
  et industrielles}, pages 40--45. Hermann \& Cie, Paris, 1936.

\bibitem{Zich1938}
O.~V. Zich.
\newblock Výrokový počet s komplexními hodnotami.
\newblock {\em Česká mysl}, 34:189–196, 1938.

\end{thebibliography}

\end{document}